\providecommand{\U}[1]{\protect\rule{.1in}{.1in}}
\newtheorem{theorem}{Theorem}[section]
\newtheorem{lemma}[theorem]{Lemma}
\newtheorem{proposition}[theorem]{Proposition}
\theoremstyle{definition}
\theoremstyle{remark}
\newtheorem{example}[theorem]{Example}
\newtheorem{remark}[theorem]{Remark}
\numberwithin{equation}{section}
\begin{document}

\title{A transmutation operator method for solving the inverse quantum scattering problem}
\author{Vladislav V. Kravchenko$^{1}$, Elina L. Shishkina$^{2}$ and Sergii M.
Torba$^{1}$\\{\small $^{1}$ Departamento de Matem\'{a}ticas, Cinvestav, Unidad
Quer\'{e}taro, }\\{\small Libramiento Norponiente \#2000, Fracc. Real de Juriquilla,
Quer\'{e}taro, Qro., 76230 MEXICO.}\\{\small $^{2}$ Voronezh State University.}\\{\small e-mail: vkravchenko@math.cinvestav.edu.mx,
storba@math.cinvestav.edu.mx, \thanks{Research was supported by CONACYT,
Mexico via the projects 222478 and 284470. Research of Vladislav Kravchenko
was supported by the Regional mathematical center of the Southern Federal
University, Russia.}}}
\maketitle

\begin{abstract}
The inverse quantum scattering problem for the perturbed Bessel equation is
considered. A direct and practical method for solving the problem is proposed.
It allows one to reduce the inverse problem to a system of linear algebraic
equations, and the potential is recovered from the first component of the
solution vector of the system. The approach is based on a special form
Fourier-Jacobi series representation for the transmutation operator kernel and
the Gelfand-Levitan equation which serves for obtaining the system of linear
algebraic equations. The convergence and stability of the method are proved as
well as the existence and uniqueness of the solution of the truncated system.
Numerical realization of the method is discussed. Results of numerical tests
are provided revealing a remarkable accuracy and stability of the method.

\end{abstract}

\section{Introduction}

We present a direct and simple method for practical solution of the inverse
quantum scattering problem for the perturbed Bessel equation
\[
Lu:=-u^{\prime\prime}+\left(  \frac{\ell(\ell+1)}{x^{2}}+q(x)\right)
u=\rho^{2}u,\quad x>0
\]
with an arbitrary fixed angular momentum $\ell\geq-1/2$ and the potential $q$
satisfying
\begin{equation}
\label{Cond on q}\int_{0}^{\infty}(x^{\mu}+ x) |\tilde q(x)|\,dx < +\infty
\end{equation}
for some $\mu\in[0,1/2)$, where
\begin{equation}
\label{tilde q}\tilde q(x) =
\begin{cases}
q(x), & \ell>-1/2,\\
\bigl(1+|\log(x)|\bigr)q(x), & \ell=-1/2.
\end{cases}
\end{equation}
The problem consists in recovering $q$ from the given scattering data. The
bibliography dedicated to the theory of this problem and applications is vast.
We refer to \cite{AgrMarch}, \cite{Chadan}, \cite{Coz 1976},
\cite{LevitanInverse}, \cite{Teschl} and references therein. However, the
numerical solution of the problem presents difficulties. We refer the reader
to \cite{Airapetyan 1997} and \cite{Kukulin 2004} where numerical approaches
are discussed, although the way of presenting the numerical results does not
give us a possibility to make a comparison.

The method presented in this paper allows one a direct reduction of the
inverse quantum scattering problem to a system of linear algebraic equations.
Moreover, only the first component of the solution vector is necessary to
recover the potential. The method is simple and does not require much
programmer's work. It is based on the classical results from the spectral
theory, such as the Gelfand-Levitan equation and the transmutation operator,
as well as on a new functional series representation for the transmutation
integral kernel, obtained in \cite{KT2020improvedNeumann}.

The present work extends the applicability of the approach based on the
functional series representations for the transmutation integral kernels
developed in the regular case $\ell=0$ in \cite{Kr2019JIIP}, \cite{Kr2019MMAS
InverseAxis}, \cite{DKK2019MMAS}, \cite{KKK QuantumFest} and reported in the
book \cite{KrBook2020}. The extension of an approach onto the singular case
$\ell\neq0$ is always a challenge requiring additional ideas and tools. The
first important\ ingredient here is an appropriate Fourier-Jacobi series
representation for the transmutation operator kernel
\cite{KT2020improvedNeumann}. It captures singular features of the kernel,
such as its behaviour near $t=0$ and on the characteristic line $t=x$, and
allows one to recover the potential from the first coefficient of the series.
Thus, we do not follow the usual approach of computing the transmutation
kernel first and then recovering the potential from it. Instead, we compute
the first coefficient of the Fourier-Jacobi series representation, from which
the potential is recovered.

The right choice of the orthogonal function system used in the series
representation resulted to be of crucial importance in the interplay between
the transmutation operator kernel and the Gelfand-Levitan input kernel, which
gave us the possibility in the present work to obtain a system of linear
algebraic equations for the coefficients of the series representation with
explicit formulas for the entries of the system matrix.

We prove the convergence and stability of the method. This results in the
possibility of recovering the potential from noisy scattering data. A
corresponding numerical example is provided. Moreover, we prove the existence
and uniqueness of the solution of the truncated system of equations arising in
the numerical realization of the method.

Thus, the method developed in the present work is convergent, stable and
possesses an important additional advantage. Its numerical implementation is
simple and does not require much programmer effort. The numerical examples
reveal a remarkable accuracy, stability and fast convergence of the method.

Besides this introduction the paper contains four sections. Section 2 presents
some preliminaries on the inverse quantum scattering problem including the
example of the square well potential, which is used later on for one of the
numerical tests. In Section 3 the Fourier-Jacobi series representation for the
transmutation operator kernel is presented. It is explained how the potential
can be recovered from the first coefficient of the series, and the
Gelfand-Levitan equation is recalled. In Section 4 we construct the system of
linear algebraic equations for the coefficients of the Fourier-Jacobi series
representation, prove the existence and uniqueness of solutions of
corresponding truncated systems and the convergence of solutions of truncated
systems to the exact one. Observing that the obtained truncated systems result
from applying the Bubnov-Galerkin procedure with a special choice of the
orthogonal function system, we prove the stability of the method, which allows
one to work efficiently with noisy scattering data. In Section 5 we discuss
the numerical implementation of the method and provide some numerical
examples. They illustrate that indeed the developed approach leads to a direct
and simple method for accurate recovering of the potential even with few
equations in the truncated system and from noisy scattering data. Finally, in
Appendix \ref{SectAppendix} we present a refined asymptotics of the Jost function.

\section{Preliminaries}

\label{Sect2}

We consider the perturbed Bessel equation%
\begin{equation}
Lu:=-u^{\prime\prime}+\left(  \frac{\ell(\ell+1)}{x^{2}}+q(x)\right)
u=\rho^{2}u,\quad x>0 \label{perturbed Bessel equation}%
\end{equation}
with the coefficient $q$, often called the potential, being a real valued
function satisfying the condition
\begin{equation}
(x^{\mu}+x)\tilde q(x) \in L_{1}(0,\infty)\qquad\text{for some }0\leq\mu<1/2,
\label{Condition on q}%
\end{equation}
Here $\tilde q$ is given by \eqref{tilde q}. Sometimes, potentials satisfying
(\ref{Condition on q}) at infinity are said to belong to the Marchenko class.
The spectral parameter $\rho\in\mathbb{C}$ is chosen so that
$\operatorname{Im}\rho\geq0$ and $\ell\geq-1/2$.

We are interested in a procedure for solving the inverse quantum scattering
problem consisting in recovering a potential $q(x)$ in the perturbed Bessel
equation from so-called scattering data which include the eigenvalues, the
corresponding norming constants and the Jost function $F_{\ell}(\rho)$,
$\rho\in[0,\infty)$. Notice that we suppose the Jost function to be given,
although in a usual study of the inverse problem it is obtained first from the
$S$-function (the scattering function) which is supposed in its turn to be
known as a part of the scattering data.

The unique solvability of such inverse quantum scattering problem follows from
\cite[Theorem 5.1]{Teschl}, where a more general class of potentials is
considered for arbitrary $\ell\geq-1/2$. Additional restrictions on the
potential imposed in this paper are needed to guarantee that the problem
possesses at most a finite number of eigenvalues, to use the Gelfand-Levitan
equation and to be sure that the solution of the Gelfand-Levitan equation is
square-integrable. For the case of integer $\ell$ one can consult a lot of
additional details, e.g., in \cite{Sta2} and \cite{Chadan}.

We remind that the set of eigenvalues, if it is not empty, consists of a
finite set of numbers $\rho_{j}^{2}\leq0$, $j=1,\ldots,N$, which are such that
equation (\ref{perturbed Bessel equation}) admits a square integrable solution
on $\left(  0,\infty\right)  $, see \cite[(II.1.10a)]{Chadan}, \cite[Theorem
5.1]{Seto1974} and \cite[Section 9.7]{Teschl2014}. Thus, $\rho_{j}=i\tau_{j}$,
$\tau_{j}\geq0$. For recalling the definition of the norming constants and of
the Jost function we proceed with some necessary notations.

A solution $\varphi_{\ell}(\rho,x)$ of (\ref{perturbed Bessel equation})
satisfying the asymptotic relation at the origin
\[
\lim_{x\rightarrow0}\frac{2^{\ell+1}}{\sqrt{\pi}}\Gamma\left(  \ell+\frac
{3}{2}\right)  x^{-(\ell+1)}\varphi_{\ell}(\rho,x)=1,
\]
is called the regular solution. Note that for integer values of $\ell$ one has
$\frac{2^{\ell+1}}{\sqrt{\pi}}\Gamma\left(  \ell+\frac{3}{2}\right)
=(2\ell+1)!!$. The last formula is known as the extension of the double
factorial symbol to complex arguments. To simplify notations, later in this
paper we will use $(2\ell+1)!!$.

In the case when $\rho=\rho_{j}$ is an eigenvalue, the regular solution
$\varphi_{\ell}(\rho_{j},x)$ is an eigenfunction, and the norming constants
are defined as
\[
c_{j}:=\frac{1}{\int_{0}^{\infty}\varphi_{\ell}^{2}(\rho_{j},x)dx}.
\]

A solution $f_{\ell}(\rho,x)$ of (\ref{perturbed Bessel equation}) satisfying
the asymptotic relation at infinity%
\[
\lim_{x\rightarrow\infty}\left(  e^{-\frac{i\pi\ell}{2}}e^{-i\rho x}f_{\ell
}(\rho,x)\right)  =1
\]
is called the Jost solution. The uniqueness and the existence of both regular
and Jost solutions is a well known fact (see, e.g., \cite{Chadan}, and for
non-integer values of $\ell$, \cite{Teschl2016}, \cite{HKT2018} and references therein).

The function $F_{\ell}(\rho)$ which can be represented as a Wronskian of the
solutions%
\[
F_{\ell}(\rho)=(-\rho)^{\ell}W\left[  f_{\ell}(\rho,x),\varphi_{\ell}%
(\rho,x)\right]
\]
is known as the Jost function. In fact, the Jost function contains information
on the behaviour of the Jost solution at the origin. The following asymptotic
relation is valid for $\ell>-1/2$ (see, e.g., \cite[Section 1.5]{Chadan}, for
non-integer values of $\ell$ it can be established using the results from
\cite{Teschl2016})
\begin{equation}
F_{\ell}(\rho)=\lim_{x\rightarrow0}\frac{\left(  -\rho x\right)  ^{\ell}%
}{\left(  2\ell-1\right)  !!}f_{\ell}(\rho,x),\label{F_l}%
\end{equation}
while for $\ell=-1/2$ it can be deduced (see \cite[Subsection 2.1]{HKT2018})
that
\begin{equation}
F_{-1/2}(\rho)=\lim_{x\rightarrow0}-\frac{\sqrt\pi(-\rho)^{-1/2}}{\sqrt
{2x}\log x}f_{-1/2}(\rho,x).\label{F_lcritical}%
\end{equation}

Note that $F_{\ell}$ is analytic in the upper half-plane, $F_{\ell}%
(\rho)=1+o(1)$ when $\rho\rightarrow\infty$, $\operatorname{Im}\rho\geq0$, and
$F_{\ell}(-\rho)=\overline{F_{\ell}}(\rho)$ for $\rho\in\mathbb{R}$
\cite[Lemma B.5]{Teschl}. Moreover, for $\ell>-1/2$ and potentials $q$ such
that $q\in L_{1}(0,\infty)$ the asymptotic formula is valid%
\[
F_{\ell}(\rho)=1+\frac{i}{2\rho}\int_{0}^{\infty}q(x)\,dx+o(\rho^{-1}%
),\qquad|\rho|\rightarrow\infty,\label{F asymptoticsKT}%
\]
see \cite[Remark 2.14]{Teschl2016}. In Appendix \ref{SectAppendix} we prove a
refinement of this formula, namely, that
\begin{equation}
F_{\ell}(\rho)=1+\frac{i}{2\rho}\int_{0}^{\infty}q(x)\,dx+O(\rho^{-2}%
),\qquad|\rho|\rightarrow\infty,\label{F asymptotics}%
\end{equation}
for any $\ell\ge-1/2$ and potentials $q\in L_{1}(0,\infty)\cap BV_{0}%
[0,\infty)$. Here $BV_{0}$ denotes functions of bounded variation vanishing at infinity.

Denote by $b_{\ell}(\rho x)$ a solution of the Bessel equation
\[
-u^{\prime\prime}+\frac{\ell(\ell+1)}{x^{2}}u=\rho^{2}u,\qquad x>0
\]
satisfying the asymptotic relation
\begin{equation}
b_{\ell}(\rho x)\sim\frac{\left(  \rho x\right)  ^{\ell+1}}{\left(
2\ell+1\right)  !!},\qquad x\rightarrow0. \label{asymptotics b_l}%
\end{equation}
It has the form
\[
b_{\ell}(\rho x)=\rho xj_{\ell}(\rho x)
\]
where $j_{\ell}$ stands for the spherical Bessel function of the first kind
(see \cite[Section 10.1]{AbramowitzStegunSpF}), $j_{\ell}(z):=\sqrt{\frac{\pi
}{2z}}J_{\ell+\frac{1}{2}}(z)$.

\begin{example}
\label{Ex01} Consider the square well potential $q$ of the form%
\begin{equation}
q(x)=
\begin{cases}
-Q^{2}, & x\leq R,\\
0, & x>R
\end{cases}
\label{square well}%
\end{equation}
where $Q$ is a positive constant. Denote $\omega:=\sqrt{\rho^{2}+Q^{2}}$. Then
the Jost solution has the form%
\[
f_{\ell}(\rho,x)=
\begin{cases}
a( \rho) b_{\ell}(\omega x)+b( \rho) \omega xh_{\ell}^{( 1) }(\omega x), &
x\leq R,\\
(-1)^{\ell}i\rho xh_{\ell}^{( 1) }(\rho x), & x>R
\end{cases}
\]
where the coefficients $a( \rho) $ and $b( \rho) $ are found from the
condition of continuity of the solution $f_{\ell}(\rho,x)$ and of its
derivative at $x=R$, which leads to the following system of equations%
\begin{align*}
a( \rho) b_{\ell}(\omega R)+b( \rho) \omega Rh_{\ell}^{( 1) }(\omega R)  &
=(-1)^{\ell}i\rho Rh_{\ell}^{( 1)}(\rho R),\\
a( \rho) \left(  \omega(\ell+1)j_{\ell}(\omega R)-\omega^{2}Rj_{\ell+1}(\omega
R)\right)   &  +b( \rho) \left(  \omega(\ell+1)h_{\ell}^{( 1) }(\omega
R)-\omega^{2}Rh_{\ell+1}^{( 1) }(\omega R)\right) \\
&  =(-1)^{\ell}i\rho\left(  (\ell+1)h_{\ell}^{( 1) }(\rho R)-\rho Rh_{\ell
+1}^{( 1) }(\rho R)\right)  .
\end{align*}
From \eqref{F_l} we find that%
\[
F_{\ell}(\rho)=(-1)^{\ell+1}ib( \rho) \left(  \frac{\rho}{\omega}\right)
^{\ell}.
\]

\end{example}

\section{The transmutation integral kernel}

A solution $u_{\ell}(\rho,x)$ of (\ref{perturbed Bessel equation}), satisfying
the asymptotic relation (\ref{asymptotics b_l}) admits the following
representation%
\[
u_{\ell}(\rho,x)=T\left[  b_{\ell}(\rho x)\right]  :=b_{\ell}(\rho x)+\int
_{0}^{x}K_{\ell}(x,t)b_{\ell}(\rho t)dt
\]
where the integral kernel $K_{\ell}(x,t)$ is a square integrable function of
the variable $t$, independent of $\rho$. This Volterra integral operator of
the second kind is known as a transmutation (or transformation) operator. The
existence of such $K_{\ell}(x,t)$ for the potentials satisfying condition
\eqref{Cond on q} at zero was proved in \cite{Sta2}. Properties of $K_{\ell
}(x,t)$ were studied in several publications (see, e.g., \cite{Volk},
\cite{Coz 1976}, \cite{Chebli}, \cite{Holz2020}, \cite{KatrakhovSitnik},
\cite{KrShishkinaTorba2018}, \cite{SitnikShishkina}, \cite{SitnikShishkina
Elsevier}). For the purpose of the present work the following statement is crucial.

\begin{theorem}
[\cite{KT2020improvedNeumann}]\label{Thm principal} Let $q$ satisfy the
condition $\int_{0}^{b} x^{\mu}|q(x)|\,dx<\infty$ for some $0\le\mu<1/2$. Then
the kernel $K_{\ell}(x,t)$ admits the following series representation%
\begin{equation}
K_{\ell}(x,t)=\sum_{n=0}^{\infty}\frac{\beta_{n}(x)}{x^{\ell+2}}t^{\ell
+1}P_{n}^{(\ell+1/2,0)}\left(  1-\frac{2t^{2}}{x^{2}}\right)  ,
\label{K series}%
\end{equation}
where $P_{n}^{(\alpha,\beta)}$ stands for the Jacobi polynomial and the
coefficients $\beta_{n}(x)$ can be calculated by a recurrent integration
procedure, starting with
\begin{equation}
\beta_{0}(x)=\left(  2\ell+3\right)  \left(  \frac{u_{\ell,0}(x)}{x^{\ell+1}%
}-1\right)  \label{beta_0}%
\end{equation}
where $u_{\ell,0}(x)$ is a regular solution of the equation
\begin{equation}
Lu=0 \label{Lu=0}%
\end{equation}
normalized by the asymptotic condition $u_{\ell,0}(x)\sim x^{\ell+1}$,
$x\rightarrow0$.

For any $x>0$, the series in \eqref{K series} converges in $L_{2}(0,x)$.
Suppose additionally that $q$ is absolutely continuous on $[0,b]$. Then the
series in \eqref{K series} converges absolutely and uniformly with respect to
$t\in\left[  0,x-\varepsilon\right]  $ for an arbitrarily small $\varepsilon
>0$. If additionally $q\in W_{1}^{3}[0,b]$, then the series converges
absolutely and uniformly with respect to $t$ on the whole $[0,x]$.
\end{theorem}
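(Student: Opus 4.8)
The plan is to realise the series \eqref{K series} as the Fourier--Jacobi expansion of the function $t\mapsto K_\ell(x,t)$ in an orthogonal system adapted to the Bessel weight, to pin down $\beta_0$ by a limit in $\rho$, to extract the recurrence for the remaining coefficients from the structure underlying the transmutation operator, and to obtain the sharper convergence modes from smoothness of the kernel. Fixing $x>0$, I would set $e_n(t):=t^{\ell+1}P_n^{(\ell+1/2,0)}\!\bigl(1-2t^2/x^2\bigr)$ and observe that the change of variable $s=1-2t^2/x^2$ maps $(0,x)$ onto $(-1,1)$ and, up to a constant factor, is an isometry of $L_2(0,x)$ onto $L_2\bigl((-1,1),(1-s)^{\ell+1/2}\,ds\bigr)$ that sends $e_n$ to a multiple of $P_n^{(\ell+1/2,0)}$; since the Jacobi polynomials form a complete orthogonal system of the latter space, $\{e_n\}_{n\ge0}$ is a complete orthogonal system of $L_2(0,x)$. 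As $K_\ell(x,\cdot)\in L_2(0,x)$ by \cite{Sta2}, it expands as $K_\ell(x,t)=\sum_{n\ge0}c_n(x)e_n(t)$ with convergence in $L_2(0,x)$, and putting $\beta_n(x):=x^{\ell+2}c_n(x)$ gives \eqref{K series} together with the first convergence claim.

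To identify $\beta_0$, I would multiply the transmutation identity $u_\ell(\rho,x)=b_\ell(\rho x)+\int_0^x K_\ell(x,t)b_\ell(\rho t)\,dt$ by $(2\ell+1)!!\,\rho^{-(\ell+1)}$ and let $\rho\to0$, which is legitimate because $u_\ell(\rho,x)$ is entire in $\rho^2$. Using $(2\ell+1)!!\,\rho^{-(\ell+1)}b_\ell(\rho t)\to t^{\ell+1}$ and the fact that the normalised solution tends to the regular solution of $Lu=0$ with leading term $x^{\ell+1}$, one obtains
\[
u_{\ell,0}(x)=x^{\ell+1}+\int_0^x K_\ell(x,t)\,t^{\ell+1}\,dt .
\]
Since $P_0^{(\ell+1/2,0)}\equiv1$ we have $t^{\ell+1}=e_0(t)$, so orthogonality annihilates all terms with $n\ge1$ and the elementary integral $\int_0^x t^{2\ell+2}\,dt=x^{2\ell+3}/(2\ell+3)$ turns this identity into \eqref{beta_0}.

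For the coefficients with $n\ge1$ I would use the SPPS representation $u_\ell(\rho,x)=\rho^{\ell+1}\sum_{k\ge0}(-\rho^2)^k\psi_k^{(\ell)}(x)$, whose coefficients $\psi_k^{(\ell)}$ are generated by iterated integrations starting from $\psi_0^{(\ell)}\propto u_{\ell,0}$, together with the Taylor series $b_\ell(\rho t)=\rho^{\ell+1}\sum_{k\ge0}\gamma_k\rho^{2k}t^{2k+\ell+1}$. Because $P_k^{(\ell+1/2,0)}\bigl(1-2t^2/x^2\bigr)$ is a polynomial of degree $k$ in $t^2$, the monomial $t^{2k+\ell+1}$ lies in $\operatorname{span}\{e_0,\dots,e_k\}$ with explicit $x$-dependent coefficients that are triangular with nonzero diagonal; substituting both expansions into the transmutation identity and matching powers of $\rho^2$ then produces, for each $k$, a linear relation expressing $\beta_k(x)$ through $\psi_k^{(\ell)}(x)$ and $\beta_0(x),\dots,\beta_{k-1}(x)$, and solving these relations successively is the advertised recurrent integration procedure. (Equivalently one may start from the Goursat problem $-\partial_x^2K_\ell+\bigl(\ell(\ell+1)/x^2+q\bigr)K_\ell=-\partial_t^2K_\ell+\ell(\ell+1)t^{-2}K_\ell$, with the characteristic condition relating $K_\ell(x,x)$ to $\tfrac12\int_0^x q$, substitute \eqref{K series}, and use the differentiation and three-term recurrences of the Jacobi polynomials to collapse the PDE into the same triangular ODE system.) I expect the bookkeeping in this step --- keeping track of the $x$ sitting inside the Jacobi argument, and confirming genuine triangularity --- to be the main technical obstacle.

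The sharper convergence statements then follow by propagating smoothness of $q$ into decay of $\beta_n(x)$ in $n$. Absolute continuity of $q$ improves the regularity of $t\mapsto K_\ell(x,t)$ on every interval $[0,x-\varepsilon]$ --- on which, after the substitution, $s$ stays away from the endpoint $s=-1$, where the Jacobi weight $(1-s)^{\ell+1/2}$ does not vanish --- and via the standard convergence theory for Jacobi expansions this yields decay $\beta_n(x)=O(n^{-1/2-\delta})$ for some $\delta>0$; combined with the uniform bound $t^{\ell+1}\bigl|P_n^{(\ell+1/2,0)}(1-2t^2/x^2)\bigr|=O(n^{-1/2})$ on such intervals, the Weierstrass test gives absolute and uniform convergence there. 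Reaching $t=x$ needs the stronger estimate $\sum_n|\beta_n(x)|<\infty$, which the extra smoothness $q\in W_1^3[0,b]$ supplies; quantifying precisely how $W_1^3$-regularity of $q$ propagates through the recurrence to the required decay of $\beta_n(x)$ is the delicate point of this last step.
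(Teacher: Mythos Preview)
The paper does not prove this theorem. It is quoted verbatim from the companion preprint \cite{KT2020improvedNeumann} (note the citation in the theorem header), and the remark following it explicitly says that ``the recurrent integration procedure mentioned in the theorem is superfluous for the present work and can be consulted in \cite{KT2020improvedNeumann}.'' So there is no proof here to compare against; the present paper uses only the \emph{statement}, together with the orthogonality relation \eqref{orthogonality Jacobi} and the key integral identity \eqref{integral b_l}, both also imported from \cite{KT2020improvedNeumann}.

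That said, your outline is structurally sound and matches what one expects the actual argument in \cite{KT2020improvedNeumann} to look like. The completeness of $\{e_n\}$ in $L_2(0,x)$ via the change of variable $s=1-2t^2/x^2$ is exactly the content of Remark~\ref{Rem Jacobi orthonormal} in the paper, and this immediately gives the $L_2$ convergence. Your computation of $\beta_0$ by letting $\rho\to 0$ in the transmutation identity and using orthogonality against $e_0=t^{\ell+1}$ is correct and clean. For the higher $\beta_n$, matching powers of $\rho^2$ against the SPPS coefficients is indeed the natural route; the paper's identity \eqref{integral b_l}, $\int_0^x e_n(t)\,b_\ell(\rho t)\,dt=x^{\ell+2}j_{\ell+2n+1}(\rho x)$, is precisely the closed form that makes this matching tractable, so you may want to establish or invoke that formula rather than working with raw monomials. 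The convergence claims under extra smoothness are, as you rightly flag, the genuinely delicate part: the passage from $q\in W_1^3$ to $\sum_n|\beta_n(x)|<\infty$ requires tracking how regularity of $q$ propagates to decay of the $\beta_n$ through whatever recurrence you set up, and your sketch does not yet contain a mechanism for that---but you have correctly identified this as the gap.
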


\begin{remark}
The condition on the potential $q$ in the theorem is equivalent to condition
\eqref{Cond on q} at the origin. The recurrent integration procedure mentioned
in the theorem is superfluous for the present work and can be consulted in
\cite{KT2020improvedNeumann}.
\end{remark}

\begin{remark}
Equality \eqref{beta_0} gives us the possibility to recover the potential $q$
if $\beta_{0}$ is known. Indeed, we have that
\begin{equation}
u_{\ell,0}(x)=\left(  \frac{\beta_{0}(x)}{\left(  2\ell+3\right)  }+1\right)
x^{\ell+1}, \label{ul0}%
\end{equation}
and since $u_{\ell,0}$ is a solution of \eqref{Lu=0}, we obtain%
\begin{equation}
q=\frac{x\beta_{0}^{\prime\prime}(x)+2(\ell+1)\beta_{0}^{\prime}(x)}{x\left(
\beta_{0}(x)+2\ell+3\right)  }. \label{q=beta}%
\end{equation}

\end{remark}

\begin{remark}
\label{Rem Jacobi orthonormal}The following orthogonality property of the
Jacobi polynomials is valid \cite{KT2020improvedNeumann}%
\begin{equation}
\int_{0}^{x}t^{2\ell+2}P_{n}^{(\ell+1/2,0)}\left(  1-\frac{2t^{2}}{x^{2}%
}\right)  P_{m}^{(\ell+1/2,0)}\left(  1-\frac{2t^{2}}{x^{2}}\right)
dt=\frac{x^{2\ell+3}}{4m+2\ell+3}\delta_{nm} \label{orthogonality Jacobi}%
\end{equation}
with $\delta_{nm}$ standing for the Kronecker delta. Consequently, for any
$x>0$ fixed, the system of functions
\begin{equation}
p_{n}(x;t):=\frac{\sqrt{4n+2\ell+3}}{x^{\ell+3/2}}t^{\ell+1}P_{n}%
^{(\ell+1/2,0)}\left(  1-\frac{2t^{2}}{x^{2}}\right)  \label{pn orthonormal}%
\end{equation}
is a complete orthonormal system in $L_{2}(0,x)$. Hence the series
\eqref{K series} is an expansion of the kernel $K_{\ell}(x,t)$ with respect to
the basis of $L_{2}(0,x)$ represented by the system of functions $\left\{
p_{n}(x;t)\right\}  _{n=0}^{\infty}$,%
\begin{equation}
K_{\ell}(x,t)=\sum_{n=0}^{\infty}\alpha_{n}(x)p_{n}(x;t) \label{Kn expansion}%
\end{equation}
with
\[
\alpha_{n}(x)=\frac{\beta_{n}(x)}{\sqrt{4n+2\ell+3}\sqrt{x}}.
\]

\end{remark}

In the following we assume that zero is not an eigenvalue of the problem. Then
the transmutation kernel $K_{\ell}(x,t)$ is related to the scattering data via
the Gelfand-Levitan integral equation
\begin{equation}
K_{\ell}(x,y)+\Omega_{\ell}(x,y)+\int_{0}^{x}K_{\ell}(x,t)\Omega_{\ell
}(t,y)dt=0,\qquad x>y\label{GL equation}%
\end{equation}
where the input kernel $\Omega_{\ell}(x,y)$ has the form%
\begin{gather*}
\Omega_{\ell}(x,y)=\sum_{j=1}^{N}C_{j}b_{\ell}(i\tau_{j}x)b_{\ell}(i\tau
_{j}y)+\frac{2}{\pi}\int_{0}^{\infty}b_{\ell}(\rho x)b_{\ell}(\rho y)\left(
\left\vert F_{\ell}(\rho)\right\vert ^{-2}-1\right)  d\rho,\\
C_{j}:=\frac{c_{j}}{\left(  i\tau_{j}\right)  ^{2\ell+2}}.
\end{gather*}

Note that under condition \eqref{Cond on q} the integral kernel $K_{\ell}$
satisfies \cite{Sta2} for any finite $a>0$
\[
\sup_{0\leq x\leq a}\Vert K_{\ell}(x,\cdot)\Vert_{L_{2}(0,x)}^{2}<\infty.
\]
The function $\Omega_{\ell}$ is symmetric, and it can be easily obtained from
\eqref{GL equation} that $\Omega_{\ell}(x,\cdot)\in L_{2}(0,x)$ and
$\Omega_{\ell}\in L_{2}((0,x)\times(0,x))$.

\section{A system of linear algebraic equations for the coefficients
$\beta_{n}(x)$}

Denote%
\begin{equation}
A_{m,n}(x):=\sum_{j=1}^{N}C_{j}j_{\ell+2n+1}(i\tau_{j}x)j_{\ell+2m+1}%
(i\tau_{j}x)+\frac{2}{\pi}\int_{0}^{\infty}j_{\ell+2n+1}(\rho x)j_{\ell
+2m+1}(\rho x)\left(  \left\vert F_{\ell}(\rho)\right\vert ^{-2}-1\right)
d\rho, \label{Amn}%
\end{equation}
and%
\begin{equation}
B_{m}(x):=-\sum_{j=1}^{N}C_{j}b_{\ell}(i\tau_{j}x)j_{\ell+2m+1}(i\tau
_{j}x)-\frac{2}{\pi}\int_{0}^{\infty}b_{\ell}(\rho x)j_{\ell+2m+1}(\rho
x)\left(  \left\vert F_{\ell}(\rho)\right\vert ^{-2}-1\right)  d\rho.
\label{Bm}%
\end{equation}

\begin{theorem}
\label{Th system for coefficients}The coefficients $\beta_{n}$ from
\eqref{K series} satisfy the following infinite system of linear algebraic
equations%
\begin{equation}
\frac{\beta_{m}(x)}{\left(  4m+2\ell+3\right)  x}+\sum_{n=0}^{\infty}\beta
_{n}(x)A_{m,n}(x)=B_{m}(x),\qquad\text{for all }m=0,1,\ldots. \label{system}%
\end{equation}

\end{theorem}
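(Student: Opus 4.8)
The plan is to derive \eqref{system} by projecting the Gelfand--Levitan equation \eqref{GL equation}, regarded for a fixed $x>0$ as an equality between functions of $y$ in $L_{2}(0,x)$, onto the orthonormal basis $\left\{ p_{m}(x;\cdot)\right\} _{m=0}^{\infty}$ of Remark \ref{Rem Jacobi orthonormal}. The key ingredient is the integral identity
\begin{equation}
\int_{0}^{x}b_{\ell}(\rho t)\,t^{\ell+1}P_{n}^{(\ell+1/2,0)}\left(  1-\frac{2t^{2}}{x^{2}}\right)  dt=x^{\ell+2}j_{\ell+2n+1}(\rho x),\qquad n=0,1,\ldots, \label{plan key}
\end{equation}
which in terms of the functions \eqref{pn orthonormal} reads $\int_{0}^{x}b_{\ell}(\rho t)\,p_{n}(x;t)\,dt=\sqrt{4n+2\ell+3}\,\sqrt{x}\,j_{\ell+2n+1}(\rho x)$. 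Identity \eqref{plan key} is a Bessel--Jacobi integral: substituting $t=xs$ and writing $b_{\ell}(\rho xs)=\sqrt{\pi\rho x/2}\,\sqrt{s}\,J_{\ell+1/2}(\rho xs)$, it reduces to the formula $\int_{0}^{1}s^{\mu+1}P_{n}^{(\mu,0)}(1-2s^{2})J_{\mu}(zs)\,ds=z^{-1}J_{\mu+2n+1}(z)$ with $\mu=\ell+1/2$ and $z=\rho x$; alternatively it follows by induction on $n$ from the three-term recurrence for the Jacobi polynomials combined with the contiguous relations for $J_{\mu}$, the base case $n=0$ being $\int_{0}^{1}s^{\mu+1}J_{\mu}(zs)\,ds=z^{-1}J_{\mu+1}(z)$.

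Granting \eqref{plan key}, I would take the $L_{2}(0,x)$ inner product of each of the three terms in \eqref{GL equation} with $p_{m}(x;\cdot)$. The first term yields the Fourier--Jacobi coefficient $\alpha_{m}(x)=\beta_{m}(x)/(\sqrt{4m+2\ell+3}\,\sqrt{x})$ by \eqref{Kn expansion}. For the second term, substitute the expression for $\Omega_{\ell}(x,y)$, interchange the $y$-integration with the $\rho$-integration and the finite sum over the eigenvalues (permissible since $b_{\ell}(\rho\,\cdot)$ and $p_{m}(x;\cdot)$ are bounded on $[0,x]$ and $\left\vert F_{\ell}(\rho)\right\vert ^{-2}-1$ is integrable on $(0,\infty)$ under the standing hypotheses, as is needed already for \eqref{Amn}--\eqref{Bm} to converge), and apply \eqref{plan key} to obtain $\langle \Omega_{\ell}(x,\cdot),p_{m}(x;\cdot)\rangle =-\sqrt{4m+2\ell+3}\,\sqrt{x}\,B_{m}(x)$ with $B_{m}$ as in \eqref{Bm}. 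For the third term, using that $\Omega_{\ell}$ is symmetric, rewrite it as $\langle K_{\ell}(x,\cdot),\mathcal{G}p_{m}(x;\cdot)\rangle$, where $\mathcal{G}$ is the self-adjoint Hilbert--Schmidt operator on $L_{2}(0,x)$ with kernel $\Omega_{\ell}$; compute $\mathcal{G}p_{m}(x;\cdot)$ by the same interchange and \eqref{plan key}, then expand $K_{\ell}(x,\cdot)=\sum_{n}\alpha_{n}(x)p_{n}(x;\cdot)$ in $L_{2}(0,x)$ (Theorem \ref{Thm principal}) and use \eqref{plan key} once more, in the form $\int_{0}^{x}K_{\ell}(x,t)b_{\ell}(\rho t)\,dt=\sum_{n}\beta_{n}(x)j_{\ell+2n+1}(\rho x)$, to arrive at $\sqrt{4m+2\ell+3}\,\sqrt{x}\sum_{n=0}^{\infty}\beta_{n}(x)A_{m,n}(x)$ with $A_{m,n}$ as in \eqref{Amn}. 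Adding the three contributions and dividing through by $\sqrt{4m+2\ell+3}\,\sqrt{x}$ produces exactly \eqref{system}.

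The routine part is the bookkeeping with the normalizing constants of \eqref{pn orthonormal}; what needs care is the legitimacy of the interchanges of summation and integration in the third term, namely that $\langle K_{\ell}(x,\cdot),\mathcal{G}p_{m}(x;\cdot)\rangle =\sum_{n}\alpha_{n}(x)\langle \mathcal{G}p_{n}(x;\cdot),p_{m}(x;\cdot)\rangle$ and that the resulting series $\sum_{n}\beta_{n}(x)A_{m,n}(x)$ converges absolutely. Both follow from the Hilbert--Schmidt property of $\mathcal{G}$: since $\left\{ p_{n}(x;\cdot)\right\} $ is an orthonormal basis, a computation as above gives $\langle \mathcal{G}p_{n}(x;\cdot),p_{m}(x;\cdot)\rangle =x\sqrt{(4n+2\ell+3)(4m+2\ell+3)}\,A_{m,n}(x)$, whence $\sum_{m,n}x^{2}(4n+2\ell+3)(4m+2\ell+3)\left\vert A_{m,n}(x)\right\vert ^{2}=\Vert\mathcal{G}\Vert_{\mathrm{HS}}^{2}<\infty$; combined with $\sum_{n}\left\vert \alpha_{n}(x)\right\vert ^{2}=\Vert K_{\ell}(x,\cdot)\Vert_{L_{2}(0,x)}^{2}<\infty$ and the Cauchy--Schwarz inequality, this legitimizes all the rearrangements. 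Thus the only genuine work is establishing \eqref{plan key} and keeping careful track of these convergence matters, after which the statement is simply the expansion of the $L_{2}$-identity \eqref{GL equation} in the basis $\left\{ p_{m}(x;\cdot)\right\}$.
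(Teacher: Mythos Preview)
Your proposal is correct and follows essentially the same route as the paper: project the Gelfand--Levitan equation onto the orthonormal system $\{p_m(x;\cdot)\}$, use the key Bessel--Jacobi identity \eqref{plan key} (which the paper quotes from \cite{KT2020improvedNeumann} as its formula for $\int_0^x t^{\ell+1}P_n^{(\ell+1/2,0)}(1-2t^2/x^2)b_\ell(\rho t)\,dt$), and justify the interchange of sum and integral. The only cosmetic difference is that the paper first substitutes the series for $K_\ell$ into the integral term and invokes Parseval's identity to legitimize the termwise integration, whereas you phrase the same step via the Hilbert--Schmidt operator $\mathcal{G}$ and Cauchy--Schwarz; the underlying argument is the same.
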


\begin{proof} Let us substitute the representation (\ref{K series}) into
(\ref{GL equation}). Consider
\begin{equation}
\int_{0}^{x}K_{\ell}(x,t)\Omega_{\ell}(t,y)dt=\frac{1}{x^{\ell+2}}\sum_{n=0}^{\infty
}\beta_{n}(x)\int_{0}^{x}t^{\ell+1}P_{n}^{(\ell+1/2,0)}\left(  1-\frac{2t^{2}%
}{x^{2}}\right)  \Omega_{\ell}(t,y)dt. \label{vsp1}%
\end{equation}
The possibility of changing the order of summation and integration follows
from the observation that this equality is nothing but a concrete realization
of the general Parseval identity \cite[p. 16]{AkhiezerGlazman}. Indeed, with
the aid of Remark \ref{Rem Jacobi orthonormal} we have
\begin{align*}
\int_{0}^{x}K_{\ell}(x,t)\Omega_{\ell}(t,y)dt  &  =\left\langle K_{\ell}(x,\cdot
),\Omega_{\ell}(\cdot,y)\right\rangle _{L_{2}(0,x)}\\
\displaybreak[2]
&  =\sum_{n=0}^{\infty}\left\langle K_{\ell}(x,\cdot),p_{n}(x;\cdot)\right\rangle
_{L_{2}(0,x)}\left\langle p_{n}(x;\cdot),\Omega_{\ell}(\cdot,y)\right\rangle
_{L_{2}(0,x)}\\
\displaybreak[2]
&  =\sum_{n=0}^{\infty}\alpha_{n}(x)\left\langle p_{n}(x;\cdot),\Omega
_{\ell}(\cdot,y)\right\rangle _{L_{2}(0,x)}\\
\displaybreak[2]
&  =\sum_{n=0}^{\infty}\frac{\beta_{n}(x)}{\sqrt{4n+2\ell+3}\sqrt{x}}\int_{0}%
^{x}\frac{\sqrt{4n+2\ell+3}}{x^{\ell+3/2}}t^{\ell+1}P_{n}^{(\ell+1/2,0)}\left(
1-\frac{2t^{2}}{x^{2}}\right)  \Omega_{\ell}(t,y)dt\\
&  =\frac{1}{x^{\ell+2}}\sum_{n=0}^{\infty}\beta_{n}(x)\int_{0}^{x}t^{\ell+1}%
P_{n}^{(\ell+1/2,0)}\left(  1-\frac{2t^{2}}{x^{2}}\right)  \Omega_{\ell}(t,y)dt.
\end{align*}
In order to proceed with the integral in (\ref{vsp1}), we need the following
result \cite{KT2020improvedNeumann}%
\begin{equation}
\int_{0}^{x}t^{\ell+1}P_{n}^{(\ell+1/2,0)}\left(  1-\frac{2t^{2}}{x^{2}}\right)
b_{\ell}(\rho t)dt=x^{\ell+2}j_{\ell+2n+1}(\rho x). \label{integral b_l}%
\end{equation}
Hence
\begin{multline}\label{Omega expansion}
\int_{0}^{x}t^{\ell+1}P_{n}^{(\ell+1/2,0)}\left(  1-\frac{2t^{2}}{x^{2}%
}\right)  \Omega_{\ell}(t,y)dt\\
=x^{\ell+2}\biggl(  \sum_{j=1}^{N}C_{j}j_{\ell+2n+1}(i\tau_{j}x)b_{\ell}(i\tau
_{j}y)+\frac{2}{\pi}\int_{0}^{\infty}j_{\ell+2n+1}(\rho x)b_{\ell}(\rho y)\left(
\left\vert F_{\ell}(\rho)\right\vert ^{-2}-1\right)  d\rho\biggr)  ,
\end{multline}
and
\begin{multline*}
\int_{0}^{x}K_{\ell}(x,t)\Omega_{\ell}(t,y)dt\\
=\sum_{n=0}^{\infty}\beta_{n}(x)\biggl(  \sum_{j=1}^{N}C_{j}j_{\ell+2n+1}%
(i\tau_{j}x)b_{l}(i\tau_{j}y)+\frac{2}{\pi}\int_{0}^{\infty}j_{\ell+2n+1}(\rho
x)b_{\ell}(\rho y)\left(  \left\vert F_{\ell}(\rho)\right\vert ^{-2}-1\right)
d\rho\biggr)  .
\end{multline*}
Thus, equation (\ref{GL equation}) can be written in the form%
\begin{multline}
\frac{y^{\ell+1}}{x^{\ell+2}}\sum_{n=0}^{\infty}\beta_{n}(x)P_{n}%
^{(\ell+1/2,0)}\left(  1-\frac{2y^{2}}{x^{2}}\right) \\
+\sum_{n=0}^{\infty}\beta_{n}(x)\biggl(  \sum_{j=1}^{N}C_{j}j_{\ell+2n+1}%
(i\tau_{j}x)b_{\ell}(i\tau_{j}y)+\frac{2}{\pi}\int_{0}^{\infty}j_{\ell+2n+1}(\rho
x)b_{\ell}(\rho y)\left(  \left\vert F_{\ell}(\rho)\right\vert ^{-2}-1\right)
d\rho\biggr) \\
=-\sum_{j=1}^{N}C_{j}b_{\ell}(i\tau_{j}x)b_{\ell}(i\tau_{j}y)-\frac{2}{\pi}%
\int_{0}^{\infty}b_{\ell}(\rho x)b_{\ell}(\rho y)\left(  \left\vert F_{\ell}%
(\rho)\right\vert ^{-2}-1\right)  d\rho\label{GL series}%
\end{multline}
Multiplying (\ref{GL series}) by $y^{\ell+1}P_{m}^{(\ell+1/2,0)}\left(
1-\frac{2y^{2}}{x^{2}}\right)  $, integrating with respect to $y$ from $0$ to
$x$, and using (\ref{integral b_l}) and (\ref{orthogonality Jacobi}) we obtain
(\ref{system}). The series in (\ref{system}) converges again due to the
general Parseval identity because it is a scalar product of the functions
$K_{\ell}(x,t)$ and $\int_{0}^{x}\Omega_{\ell}(t,y)p_{m}(x;t)dt$ in the space
$L_{2}(0,x)$.
\end{proof}

The functions $\frac{\beta_{m}(x)}{\sqrt{4m+2\ell+3}\sqrt x}$ are the Fourier
coefficients of the function $K_{\ell}(x,\cdot)$ with respect to the system
\eqref{pn orthonormal}, see \eqref{Kn expansion}. It follows from
\eqref{Omega expansion} that the functions $\sqrt{4m+2\ell+3}\sqrt x\cdot
B_{m}(x)$ are the Fourier coefficients of the function $-\Omega_{\ell}%
(x,\cdot)$ with respect to the system \eqref{pn orthonormal}. Finally,
multiplying \eqref{Omega expansion} by $y^{\ell+1}P_{m}^{(\ell+1/2,0)}\left(
1-\frac{2y^{2}}{x^{2}}\right)  $, integrating with respect to $y$ from $0$ to
$x$ and using (\ref{integral b_l}) we obtain that
\[
x^{2\ell+4}A_{n,m}(x) = \int_{0}^{x} \int_{0}^{x} t^{\ell+1} P_{n}%
^{(\ell+1/2,0)}\left(  1-\frac{2t^{2}}{x^{2}}\right)  y^{\ell+1} P_{m}%
^{(\ell+1/2,0)}\left(  1-\frac{2y^{2}}{x^{2}}\right)  \Omega_{\ell
}(t,y)\,dt\,dy,
\]
or that
\[
\sqrt{4n+2\ell+3}\sqrt{4m+2\ell+3}\cdot x A_{m,n}(x) = \int_{0}^{x} \int
_{0}^{x} p_{n}(t) p_{m}(y) \Omega_{\ell}(t,y)\,dt\,dy.
\]
The last equality means that the functions $\sqrt{4n+2\ell+3}\sqrt{4m+2\ell
+3}\cdot x A_{m,n}(x)$ are the Fourier coefficients of the function
$\Omega_{\ell}$ with respect to the system $p_{n}\times p_{m}$.

Hence for each fixed $x>0$ the infinite system \eqref{system} can be written
as
\begin{equation}
\xi_{j}-\lambda\sum_{k=0}^{\infty}a_{jk}\xi_{k}=b_{j},\qquad j=0,1,\ldots,
\label{system Kantorovich}%
\end{equation}
where $\lambda=-1$ and
\[
\xi_{j}=\frac{\beta_{j}(x)}{\sqrt{4j+2\ell+3}\sqrt{x}},\quad b_{j}%
=\sqrt{4j+2\ell+3}\sqrt{x}\cdot B_{j}(x),\quad a_{jk}=\sqrt{4j+2\ell+3}%
\sqrt{4k+2\ell+3}\cdot xA_{j,k}(x).
\]
The coefficient vectors satisfy $\{b_{j}\}_{j=0}^{\infty}\in\ell_{2}$,
$\{a_{j,k}\}_{j,k=0}^{\infty}\in\ell_{2}\otimes\ell_{2}$ and the unknown
vector $\{\xi_{j}\}_{j=0}^{\infty}$ is sought to belong to $\ell_{2}$. The
systems of such type, with coefficients from $\ell_{2}$, were studied in
\cite[Chapter 14, \S 3]{KantorovichAkilov}, and the following result follows immediately.

\begin{proposition}
\label{Prop Convergence} Let $x>0$ be fixed. Consider the system
\eqref{system} truncated to $M+1$ equations, i.e., we consider $m,n\le M$.
Then for sufficiently large $M$ the truncated system has a unique solution
which we denote by $\{\beta_{m}^{(M)}(x)\}_{m=0}^{M}$ and
\[
\sum_{m=0}^{M} \frac{|\beta_{m}(x) - \beta_{m}^{(M)}(x)|^{2}}{(4m+2\ell+3)x} +
\sum_{m=M+1}^{\infty}\frac{|\beta_{m}(x)|^{2}}{(4m+2\ell+3)x}\to0,\qquad
M\to\infty,
\]
from which it also follows that
\[
\beta_{0}^{(M)}(x)\to\beta_{0}(x),\qquad M\to\infty.
\]

\end{proposition}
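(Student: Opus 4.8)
The plan is to identify the truncation with the finite-section (``reduction'') method for the $\ell_{2}$-system \eqref{system Kantorovich} and to apply the convergence theorem of \cite[Chapter 14, \S 3]{KantorovichAkilov}, after checking its hypotheses. Fix $x>0$, let $U\colon L_{2}(0,x)\to\ell_{2}$, $Uf=\{\langle f,p_{n}(x;\cdot)\rangle\}_{n\ge0}$, be the unitary map attached to the orthonormal basis \eqref{pn orthonormal}, and let $\mathcal{I}_{\Omega}$ denote the integral operator on $L_{2}(0,x)$ with kernel $\Omega_{\ell}(t,y)$. The computations preceding \eqref{system Kantorovich} exhibit $A=(a_{jk})$ as the matrix of $\mathcal{I}_{\Omega}$ in the basis $\{p_{n}\}$, so $A=U\mathcal{I}_{\Omega}U^{-1}$; since $\Omega_{\ell}\in L_{2}((0,x)\times(0,x))$ this operator is Hilbert--Schmidt and $\sum_{j,k}|a_{jk}|^{2}=\|\Omega_{\ell}\|_{L_{2}((0,x)\times(0,x))}^{2}<\infty$. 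Likewise $b=-U\bigl(\Omega_{\ell}(x,\cdot)\bigr)\in\ell_{2}$, and the sought solution $\xi=\{\alpha_{n}(x)\}=UK_{\ell}(x,\cdot)$ belongs to $\ell_{2}$ because $K_{\ell}(x,\cdot)\in L_{2}(0,x)$; under $U$ the Gelfand--Levitan equation \eqref{GL equation} is exactly $(I+A)\xi=b$.

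The step I expect to be the main obstacle is to show that $I+A$ is boundedly invertible on $\ell_{2}$, equivalently that $-1$ is not an eigenvalue of the compact operator $A$. Here the standing assumption that $0$ is not an eigenvalue of the problem is used: it guarantees that the Gelfand--Levitan equation \eqref{GL equation} is uniquely solvable in $L_{2}(0,x)$ (classical; see, e.g., \cite{Chadan}, \cite{Teschl}), so $I+\mathcal{I}_{\Omega}$ has trivial kernel, and since $\mathcal{I}_{\Omega}$ is compact the Fredholm alternative makes $I+\mathcal{I}_{\Omega}$, and hence $I+A$, boundedly invertible. In particular \eqref{system} has the unique $\ell_{2}$-solution $\xi=(I+A)^{-1}b$.

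For the truncation, let $P_{M}$ be the orthogonal projection of $\ell_{2}$ onto the span of the first $M+1$ coordinate vectors. The assumption $\{a_{jk}\}\in\ell_{2}\otimes\ell_{2}$ gives the tail estimate $\|A-P_{M}AP_{M}\|\le\|A-P_{M}AP_{M}\|_{\mathrm{HS}}\le\bigl(\sum_{\max(j,k)>M}|a_{jk}|^{2}\bigr)^{1/2}\to0$, so $I+P_{M}AP_{M}\to I+A$ in operator norm; as $I+A$ is invertible and invertibility is an open condition, $I+P_{M}AP_{M}$ is invertible for all large $M$ and its inverse converges in norm to $(I+A)^{-1}$. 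On $\ell_{2}=\operatorname{Ran}P_{M}\oplus\operatorname{Ran}(I-P_{M})$ the operator $I+P_{M}AP_{M}$ is block diagonal --- equal to the $(M+1)\times(M+1)$ matrix $P_{M}(I+A)P_{M}$ on the first summand and to the identity on the second --- so its invertibility is equivalent to unique solvability of the truncated system (the case $m,n\le M$ of \eqref{system}), which gives the first assertion, and the truncated solution is $\xi^{(M)}:=P_{M}(I+P_{M}AP_{M})^{-1}P_{M}b\to(I+A)^{-1}b=\xi$ in $\ell_{2}$ (this is the reduction-method theorem of \cite[Chapter 14, \S 3]{KantorovichAkilov}). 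Finally, writing $\xi^{(M)}=\{\beta_{m}^{(M)}(x)/(\sqrt{4m+2\ell+3}\sqrt{x})\}_{m=0}^{M}$ extended by zeros and using that it is supported on $m\le M$,
\[
\|\xi-\xi^{(M)}\|_{\ell_{2}}^{2}=\sum_{m=0}^{M}\frac{|\beta_{m}(x)-\beta_{m}^{(M)}(x)|^{2}}{(4m+2\ell+3)x}+\sum_{m=M+1}^{\infty}\frac{|\beta_{m}(x)|^{2}}{(4m+2\ell+3)x}\longrightarrow0,
\]
and convergence in $\ell_{2}$ forces $\xi_{0}^{(M)}\to\xi_{0}$, i.e.\ $\beta_{0}^{(M)}(x)\to\beta_{0}(x)$ since $\beta_{0}(x)=\sqrt{2\ell+3}\sqrt{x}\,\xi_{0}$. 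Apart from the invertibility of $I+A$, every step is routine Hilbert-space perturbation theory.
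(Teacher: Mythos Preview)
Your proposal is correct and follows the same approach as the paper, which simply declares that the result ``follows immediately'' from \cite[Chapter~14, \S 3]{KantorovichAkilov} once the system has been cast in the form \eqref{system Kantorovich} with $\{a_{jk}\}\in\ell_2\otimes\ell_2$ and $\{b_j\}\in\ell_2$. You have spelled out the content of that citation---in particular you make explicit the key hypothesis that $I+A$ be invertible on $\ell_2$ and supply the justification (compactness of $\mathcal{I}_\Omega$, unique solvability of the Gelfand--Levitan equation under the standing assumption that $0$ is not an eigenvalue, and the Fredholm alternative), a point the paper leaves implicit in its appeal to Kantorovich--Akilov.
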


The same truncated system results from the application of the Bubnov-Galerkin
procedure to the integral equation \eqref{GL equation} with respect to the
system \eqref{pn orthonormal}, see \cite[\S 14]{Mihlin}. However, in our
approach we do not need to solve the complete system, only the first function
$\beta_{0}$ is necessary to recover the potential. Also we point out that the
special form of the function system \eqref{pn orthonormal} allowed us to
transform the scalar products arising in the Bubnov-Galerkin procedure into
the form \eqref{Amn} and \eqref{Bm}. As a consequence of the general theory
presented in \cite[\S 14]{Mihlin} we obtain a stability result for the
proposed method.

Let $I_{M}$ be the $(M+1)\times(M+1)$ identity matrix, $L_{M}=(a_{jk}%
)_{j,k=0}^{M}$ be the coefficient matrix of the truncated system and
$R_{M}=(b_{j})_{j=0}^{M}$ be the truncated right-hand side. Following
\cite[\S 9]{Mihlin} consider a system (called non-exact system)
\[
(I_{M}+L_{M}+\Gamma_{M})v=R_{M}+\delta_{M},
\]
where $\Gamma_{M}$ is an $(M+1)\times(M+1)$ matrix representing errors in the
coefficients $a_{jk}$, and $\delta_{M}$ is a column-vector representing errors
in the coefficients $b_{j}$. Let $U_{M}$ denote the solution of the exact
truncated system (with $\Gamma_{M}=0$ and $\delta_{M}=0$) and $V_{M}$ the
solution of the non-exact system. Note that $U_{M}=\left\{ \frac{\beta^{M}%
_{m}(x)}{\sqrt{4m+2\ell+3}\sqrt{x}}\right\} _{m=0}^{M}$, see Proposition
\ref{Prop Convergence}. The solution of the Bubnov-Galerkin procedure is
called stable if there exist constants $c_{1}$, $c_{2}$ and $r$ independent of
$M$ such that for $\Vert\Gamma_{M}\Vert\leq r$ and arbitrary $\delta_{M}$ the
non-exact system is solvable and the following inequality holds
\[
\Vert U_{M}-V_{M}\Vert\leq c_{1}\Vert\Gamma_{M}\Vert+c_{2}\Vert\delta_{M}%
\Vert.
\]
From \cite[Theorems 14.1 and 14.2]{Mihlin} the following result follows.

\begin{proposition}
\label{Prop CondNum} The approximate solution $\left\{ \frac{\beta^{M}_{m}%
(x)}{\sqrt{4m+2\ell+3}\sqrt{x}}\right\} _{m=0}^{M}$ of system
\eqref{system Kantorovich} is stable. Moreover, the condition numbers of the
coefficient matrices $I_{M}+L_{M}$ are bounded.
\end{proposition}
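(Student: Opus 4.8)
The plan is to deduce Proposition \ref{Prop CondNum} directly from the abstract Bubnov--Galerkin theory of \cite[\S 14]{Mihlin} (Theorems 14.1 and 14.2 there), once we have checked that the hypotheses of that theory are met in our concrete setting. The key observation, already recorded right before the statement, is that the truncated system we solve is \emph{exactly} the Bubnov--Galerkin system for the Gelfand--Levitan integral equation \eqref{GL equation} relative to the orthonormal basis $\{p_{n}(x;\cdot)\}_{n=0}^{\infty}$ of $L_{2}(0,x)$. Thus the relevant operator is $I+\mathcal{T}$, where $\mathcal{T}$ is the integral operator with kernel $\Omega_{\ell}(x,\cdot)$ acting on $L_{2}(0,x)$; its matrix in the basis $\{p_{n}\}$ has entries $a_{jk}=\sqrt{4j+2\ell+3}\sqrt{4k+2\ell+3}\,xA_{j,k}(x)$ and $R=\{b_{j}\}$ is the representation of $-\Omega_{\ell}(x,\cdot)$.

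First I would verify the two structural facts that make the Mihlin machinery applicable. The operator $\mathcal{T}$ is Hilbert--Schmidt on $L_{2}(0,x)$: this is exactly the statement $\Omega_{\ell}\in L_{2}((0,x)\times(0,x))$ recorded at the end of Section 2, which gives $\{a_{j,k}\}\in\ell_{2}\otimes\ell_{2}$ and hence compactness of $\mathcal{T}$. Second, and this is the crucial point, the homogeneous equation $(I+\mathcal{T})\varphi=0$ has only the trivial solution; equivalently $-1$ is not an eigenvalue of $\mathcal{T}$, so that $I+\mathcal{T}$ is boundedly invertible on $L_{2}(0,x)$. This is precisely what the unique solvability of the Gelfand--Levitan equation provides: the transmutation kernel $K_{\ell}(x,\cdot)$ is the unique $L_{2}(0,x)$ solution of \eqref{GL equation}, which is guaranteed under our hypotheses (and is the content of \cite[Theorem 5.1]{Teschl} together with the standing assumption that zero is not an eigenvalue). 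With these two facts, Theorems 14.1 and 14.2 of \cite{Mihlin} apply verbatim: the Bubnov--Galerkin method for $I+\mathcal{T}$ relative to a complete orthonormal system converges, the truncated operators $I_{M}+L_{M}$ are invertible for all large $M$ with inverses bounded uniformly in $M$, the scheme is stable in the stated sense with constants $c_{1},c_{2},r$ independent of $M$, and the condition numbers $\operatorname{cond}(I_{M}+L_{M})=\|I_{M}+L_{M}\|\,\|(I_{M}+L_{M})^{-1}\|$ stay bounded as $M\to\infty$.

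The remaining work is bookkeeping: I would spell out that $\|I_{M}+L_{M}\|\le 1+\|\mathcal{T}\|$ is bounded simply because $L_{M}$ is a finite section of the bounded operator $\mathcal{T}$, and that the uniform bound on $\|(I_{M}+L_{M})^{-1}\|$ is the nontrivial half supplied by \cite[Theorem 14.2]{Mihlin}; combining the two yields the boundedness of the condition numbers. The stability inequality $\|U_{M}-V_{M}\|\le c_{1}\|\Gamma_{M}\|+c_{2}\|\delta_{M}\|$ is then the standard perturbation estimate for linear systems whose coefficient matrices have uniformly bounded inverses: for $\|\Gamma_{M}\|\le r:=\tfrac12\sup_{M}\|(I_{M}+L_{M})^{-1}\|^{-1}$ a Neumann-series argument gives solvability of the non-exact system and the displayed bound with explicit $c_{1},c_{2}$.

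The main obstacle is not any computation but making sure the invertibility hypothesis of the Mihlin theorems is genuinely in force, i.e.\ that $-1\notin\sigma(\mathcal{T})$. This is where the standing assumption ``zero is not an eigenvalue of the problem'' and the known unique solvability of the Gelfand--Levitan equation for Marchenko-class potentials are essential; without them the truncated systems could be singular for infinitely many $M$ and the whole scheme would collapse. Everything else — Hilbert--Schmidt property, identification of the Galerkin system, the perturbation estimate — is routine once that spectral fact is in hand, so I would state it carefully and then invoke \cite[Theorems 14.1 and 14.2]{Mihlin} to close the argument.
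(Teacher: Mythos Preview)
Your proposal is correct and follows exactly the approach taken in the paper: the proposition is obtained by invoking \cite[Theorems 14.1 and 14.2]{Mihlin}, after noting that the truncated system is the Bubnov--Galerkin system for the uniquely solvable Gelfand--Levitan equation with a Hilbert--Schmidt kernel. Your write-up simply makes explicit the verification of the hypotheses (compactness of $\mathcal{T}$ and invertibility of $I+\mathcal{T}$) that the paper leaves implicit.
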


This result allows one to recover the potential from inexact to a certain
point or noisy scattering data.

\section{Numerical implementation}

\subsection{General scheme}

Theorem \ref{Th system for coefficients} and Proposition
\ref{Prop Convergence} lead to a direct and simple method for solving the
inverse quantum scattering problem.

\begin{enumerate}
\item Given the Jost function, the eigenvalues and the norming constants.
Choose a number of equations $M+1$, so that the truncated system
\begin{equation}
\frac{\beta_{m}(x)}{\left(  4m+2\ell+3\right)  x}+\sum_{n=0}^{M}\beta
_{n}(x)A_{m,n}(x)=B_{m}(x),\qquad\text{for all }m=0,\ldots,M
\label{truncated system A}%
\end{equation}
is to be solved.

\item Compute $B_{m}(x)$ and $A_{m,n}(x)$ according to the formulas (\ref{Bm})
and (\ref{Amn}).

\item Solve the system (\ref{truncated system A}) to find $\beta_{0}(x)$.

\item Compute $q$ with the aid of (\ref{q=beta}) or by computing first the
particular solution $u_{\ell,0}$ using (\ref{ul0}).
\end{enumerate}

\begin{remark}
Since the condition numbers of truncated systems \eqref{system Kantorovich}
are bounded, see Proposition \ref{Prop CondNum}, it may be worth converting
the system \eqref{truncated system A} into the truncated system of the form
\eqref{system Kantorovich} for large values of $M$.
\end{remark}

\subsection{On calculation of the integrals}

\label{SubsectAsymptotics} Calculation of the integrals in \eqref{Amn} and
\eqref{Bm} is one of the key steps in the proposed method. Since for $\rho
\in\mathbb{R}$
\[
|j_{\nu}(\rho x)|=\frac{\cos(\rho x-\frac{1}{2}\pi\nu-\frac{\pi}{4})}{|\rho
x|}+O\left(  \frac{1}{|\rho x|^{2}}\right)  ,\qquad|\rho|\rightarrow\infty,
\]
see \cite[(9.2.1)]{AbramowitzStegunSpF} and
\begin{equation}
\left\vert F_{\ell}(\rho)\right\vert ^{-2}-1=\frac{1}{4\rho^{2}}\left(
\int_{0}^{\infty}q(x)\,dx\right)  ^{2}+O\left(  \frac{1}{\rho^{2}}\right)
=O\left(  \frac{1}{\rho^{2}}\right) ,\qquad|\rho|\rightarrow\infty,
\label{F asymptotics2}%
\end{equation}
see \eqref{F asymptotics}, we have
\[
\left\vert j_{\ell+2n+1}(\rho x)j_{\ell+2m+1}(\rho x)\left(  \left\vert
F_{\ell}(\rho)\right\vert ^{-2}-1\right)  \right\vert \leq\frac{c_{1}}%
{x^{2}\rho^{4}}%
\]
and
\[
\left\vert b_{\ell}(\rho x)j_{\ell+2m+1}(\rho x)\left(  \left\vert F_{\ell
}(\rho)\right\vert ^{-2}-1\right)  \right\vert \leq\frac{c_{2}}{x\rho^{3}%
},\qquad\rho\rightarrow+\infty.
\]
As one can see, the integral in \eqref{Bm} can converge slowly. The
convergence can be improved to some extent subtracting leading term and
integrating it explicitly. Note that due to \eqref{F asymptotics2},
\begin{equation}
\rho^{2}\left(  \left\vert F_{\ell}(\rho)\right\vert ^{-2}-1\right)
=O(1),\qquad|\rho|\rightarrow\infty, \label{MainTermFl}%
\end{equation}
that is, a bounded term. Numerical experiments suggest that this bounded term
is a sum of a constant, an oscillating function and an $o(1)$ function. The
value of the constant, which we will denote by $\tilde{F}_{\ell}$, can be
easily estimated numerically. For example, one can compute the expression
\eqref{MainTermFl} for some set of points $\{\rho_{k}\}_{k=0}^{K}$ and take
for $\tilde{F}_{\ell}$ an average of the obtained values. See Figure
\ref{Fig Fl} for an illustration.

Note also that (see \cite[2.12.31.2]{Prudnikov})
\begin{equation}
\int_{0}^{\infty}\frac{j_{\ell+2n+1}(\rho x)j_{\ell+2m+1}(\rho x)}{\rho^{2}%
}\,d\rho=%
\begin{cases}
\frac{\pi x}{4(\ell+2n+1/2)_{3}}, & \text{if }n=m,\\
\frac{\pi x}{8(\ell+n+m+1/2)_{3}}, & \text{if }n=m\pm1,\\
0, & \text{if }|n-m|\geq2,
\end{cases}
\label{Int1}%
\end{equation}
and
\begin{equation}
\int_{0}^{\infty}\frac{b_{\ell}(\rho x)j_{\ell+2m+1}(\rho x)}{\rho^{2}}%
\,d\rho=%
\begin{cases}
\frac{\pi x}{2(\ell+1/2)_{2}}, & \text{if }m=0,\\
0, & \text{if }m>0,
\end{cases}
\label{Int2}%
\end{equation}
where $(x)_{n}$ stands for the Pochhammer symbol. Hence instead of computing
integrals \eqref{Amn} and \eqref{Bm} one can compute the integrals
\begin{equation}
\int_{0}^{\infty}j_{\ell+2n+1}(\rho x)j_{\ell+2m+1}(\rho x)\left(  \left\vert
F_{\ell}(\rho)\right\vert ^{-2}-1-\frac{\tilde{F}_{\ell}}{\rho^{2}}\right)
d\rho\label{AmnMod}%
\end{equation}
and
\begin{equation}
\int_{0}^{\infty}b_{\ell}(\rho x)j_{\ell+2m+1}(\rho x)\left(  \left\vert
F_{\ell}(\rho)\right\vert ^{-2}-1-\frac{\tilde{F}_{\ell}}{\rho^{2}}\right)
d\rho\label{BmMod}%
\end{equation}
and afterwards add expressions \eqref{Int1} and \eqref{Int2} multiplied by
$\tilde{F}_{\ell}$. If the integrals are truncated and computed on a segment
$[0,K]$, the proposed modification leads to a more accurate result due to the
integral tail taken into account (the oscillating and $o(1)$ parts in
\eqref{MainTermFl} are expected to result in smaller values in comparison with
the part given by the constant $\tilde F_{\ell}$). We would like to mention
that the proposed modification is nothing more than an adaptation of the
method presented in \cite[(9.98)]{Piessens2000} with first two terms taken
into account. Note also that in the case $\ell=-1/2$ the expression
\eqref{Int1} can not be used for $n=m=0$, and the expression \eqref{Int2} for
$m=0$ (due to the divergence at the origin). One should use directly
expressions \eqref{Amn} and \eqref{Bm}. For $\ell<0$ the integrals
\eqref{AmnMod} and \eqref{BmMod} possess integrable singularity at the origin.

In the present work we opted out of applying special methods for calculating
oscillatory integrals and the Hankel transform in particular (see, e.g.,
\cite{Piessens2000}, \cite{Zaman2019} and references therein). The main reason
is that the function $F_{\ell}$ oscillates a lot even for simplest potentials,
see Figure \ref{Fig Fl}. For that reason we are not expecting a simple
approximation of the term $\left\vert F_{\ell}(\rho)\right\vert ^{-2}-1$ to be
possible and decided to leave a detailed study for a separate work.

\begin{figure}[tbh]
\centering
\includegraphics[bb=0 0 360 144, width=5in,height=2in]{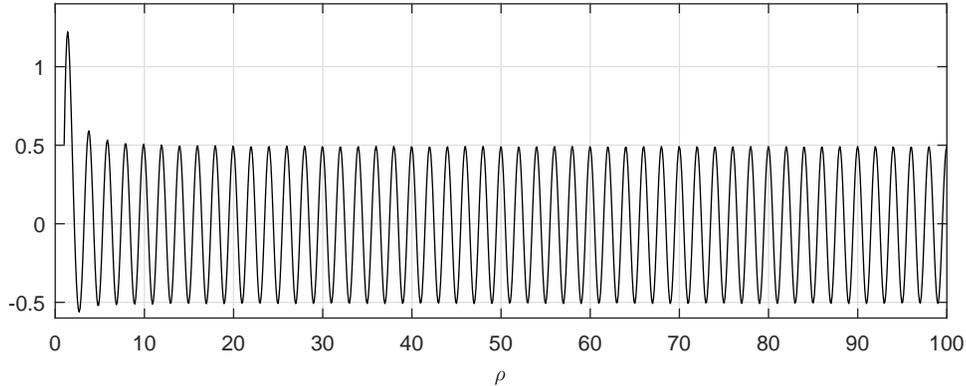}
\caption{Plot of the function $\rho^{2} \left(  \left|  F_{\ell}(\rho)\right|  ^{-2}-1-\frac{\tilde F_{\ell}}{\rho^{2}}\right)  $ for the square well potential from Example \ref{Ex01} with $\ell=1$, $Q=1$ and $R=\pi/2$. The parameter
$\tilde F_{\ell}$ is estimated numerically to be $1.5079$.}
\label{Fig Fl}
\end{figure}

\subsection{Numerical examples}

The numerical illustrations presented below were obtained in Matlab2017. For
the numerical integration on step 2 a sufficiently large interval was chosen
and the Matlab routine \texttt{trapz} was used. On the last step, for
recovering $q$ we used (\ref{q=beta}). Here the differentiation was performed
by representing the computed function $\beta_{0}(x)$ in the form of a spline
with the aid of the Matlab routine \texttt{spapi} with a posterior
differentiation with the Matlab command \texttt{fnder}.

\begin{example}
\label{Ex 1}Consider the potential \eqref{square well} with $\ell=2$, $Q=1$
and $R=\pi/2$. On Figure \ref{Fig 1} the recovered potential (on the left) and
its absolute error (on the right) are shown in the cases $M=0$, $M=1$, $M=4$
and $M=9$ that corresponds to 1, 2, 5 and 10 equations in the truncated system
\eqref{truncated system A}, respectively. Thus, a very reduced number of
equations from the system \eqref{truncated system A} is sufficient even in the
case of a discontinuous potential. For the numerical integration we have used
the interval $\rho\in\lbrack0,5000]$. However it should be mentioned that such
a large interval was used only to demonstrate that the method can recover
smooth potentials with a 10 decimal digits accuracy. Reducing the integration
to the interval $[0,100]$ and taking the step-size of $1/10$ for the
\texttt{trapz} function still allowed us to recover the potential with 4--5
decimal figures.

\begin{figure}[tbh]
\centering
\includegraphics[bb=0 0 216 173
height=2.4n,
width=3in
]{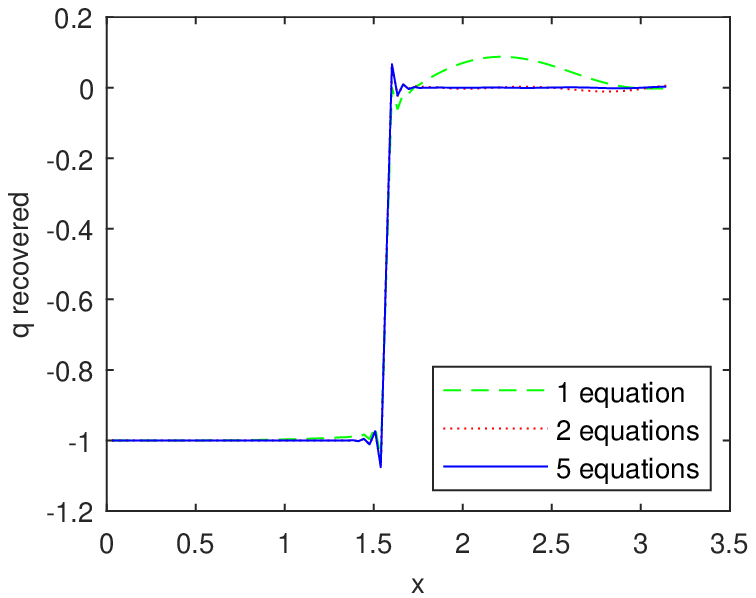}
\quad
\includegraphics[bb=0 0 216 173
height=2.4n,
width=3in
]{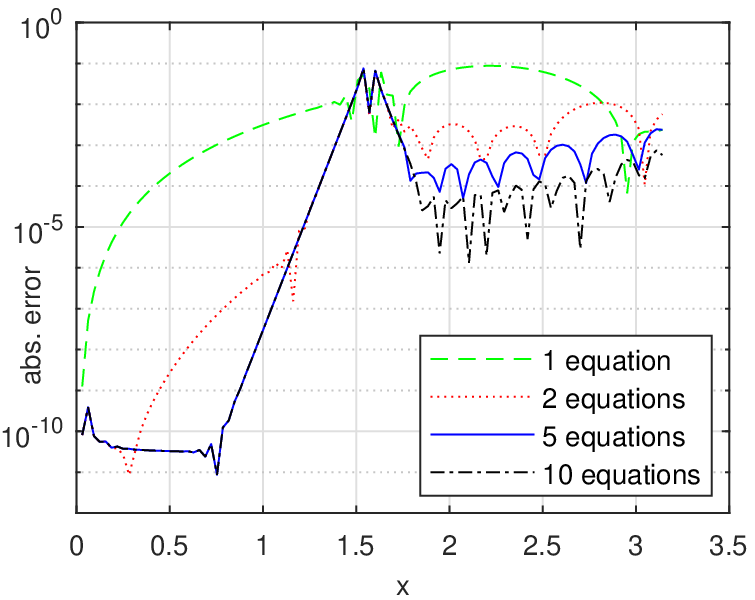}
\caption{On the left: the square well potential from Example
\ref{Ex 1} with $\ell=2$ recovered on the interval $(0,\pi]$ with $M=0$, $M=1$
and $M=4$ that corresponds to 1, 2 and 5 equations in the truncated system
(\ref{truncated system A}), respectively. On the right: absolute error of the
recovered potential for $M\in\{0, 1, 4, 9\}$ corresponding to 1, 2, 5 and 10
equations in the truncated system (\ref{truncated system A}), respectively.}
\label{Fig 1}
\end{figure}

\begin{figure}[tbh]
\centering
\includegraphics[bb=0 0 216 173
height=2.4n,
width=3in
]{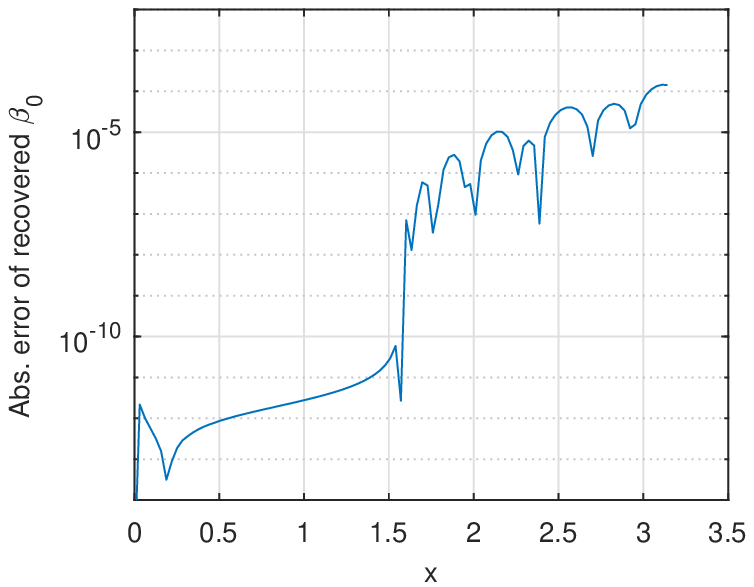}
\quad
\includegraphics[bb=0 0 216 173
height=2.4n,
width=3in
]{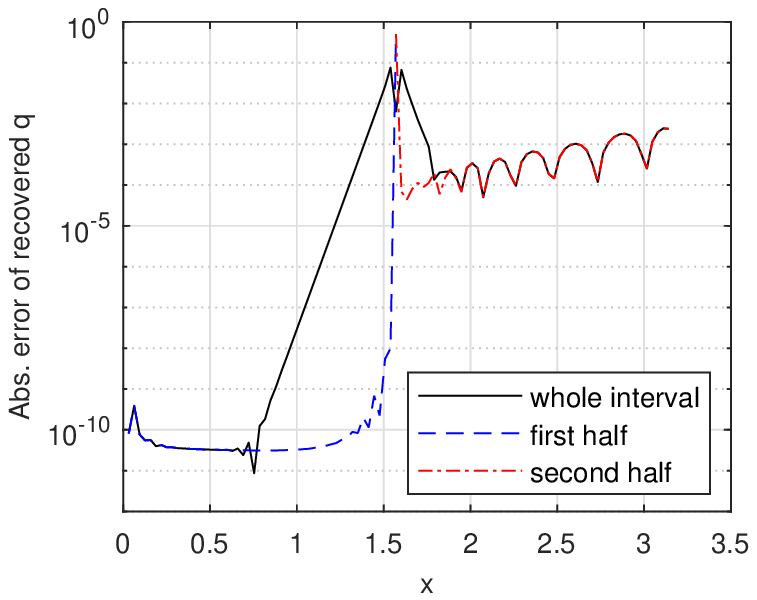}
\caption{On the left: absolute error of the recovered
coefficient $\beta_{0}$ for the square well potential from Example \ref{Ex 1}
with $\ell=2$, $M=4$ on the interval $(0,\pi]$. On the right: absolute error
of the recovered potential depending on the choise of the interval used for
spline interpolation of the coefficient $\beta_{0}$ and differentiation.}%
\label{Fig 2}%
\end{figure}

Note that the error increase in the recovered potential closer to the
discontinuity point $x=\pi/2$ is due to the error propagation in the spline
interpolation procedure. Indeed, on Figure \ref{Fig 2}, left plot, we show the
absolute error of the recovered coefficient $\beta_{0}$. As one can
appreciate, the error remains small almost up to the discontinuity point
$x=\pi/2$. So one can expect that applying numerical differentiation without
using values of $\beta_{0}$ from both sides of the discontinuity point, e.g.,
the finite difference or constructing a spline using the data from $[0,\pi/2]$
only, can reduce the error for values of $x$ close to $\pi/2$. Indeed, on
Figure \ref{Fig 2}, right plot, we show the error of the recovered potential
when the coefficient $\beta_{0}$ was approximated by a spline separately on
$[0,\pi/2]$ and on $[\pi/2,\pi]$. One can appreciate a higher accuracy close
to $x=\pi/2$.
\end{example}

\begin{example}
\label{Ex 2} The method gives excellent results for negative values of $\ell$
and for larger values of $\ell$ as well. Let us consider the same potential as
in Example \ref{Ex 1} but for $\ell=-1/2$ and $\ell=e^{3}$.

Note that for $\ell=-1/2$ the problem possesses an eigenvalue. We used the
method from \cite{KT2020improvedNeumann} to find its value, $\rho_{1}%
^{2}\approx-0.258265599397038$, with a corresponding norming constant
$c_{1}\approx0.469060824384319$.

On Figure \ref{Figure q ells} we show the absolute errors of the recovered potentials.

\begin{figure}[tbh]
\centering
\begin{tabular}
[c]{cc}%
$\ell=-1/2$ & $\ell=e^{3}$\\
\includegraphics[bb=0 0 216 173
height=2.4n,
width=3in
]{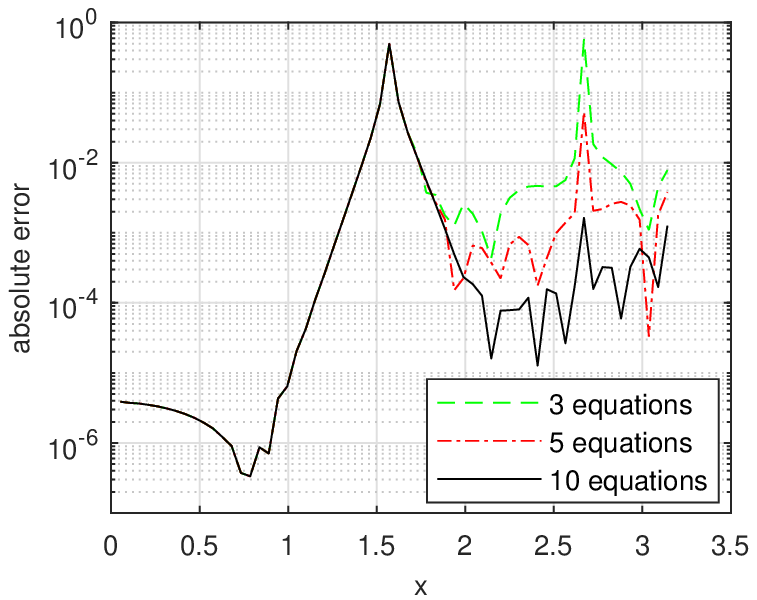} & 
\includegraphics[bb=0 0 216 173
height=2.4n,
width=3in
]{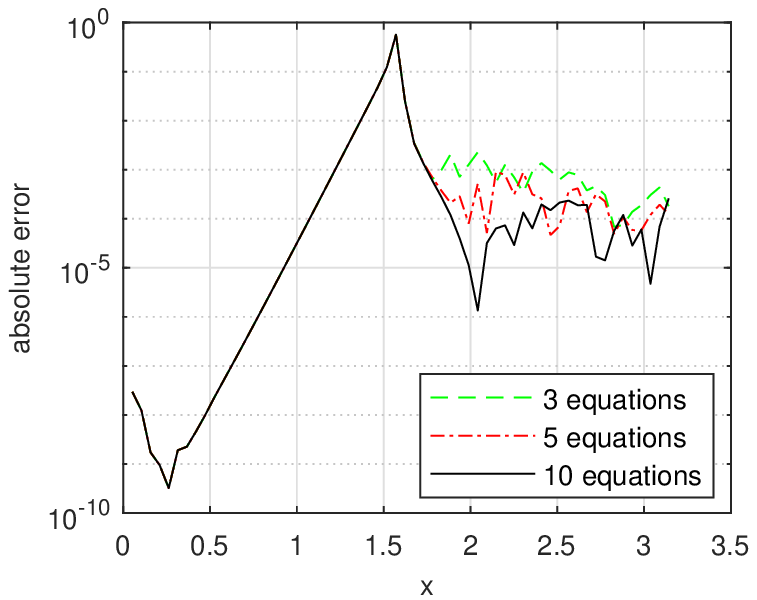}
\end{tabular}
\caption{Absolute errors of the recovered square well potential from Example
\ref{Ex 2} having $\ell=-1/2$ (on the left) and $\ell=e^{3}$ (on the right).
The potential was recovered on the interval $(0,\pi]$ and 3, 5 or 10 equations
were used in the truncated system \eqref{truncated system A}.}%
\label{Figure q ells}%
\end{figure}
\end{example}

\begin{example}
\label{Ex3}

Let us consider the equation with the Hulth\'{e}n effective potential
\begin{equation}
\label{eq HulthenEff}L_{1}u:=-u^{\prime\prime}+\left(  \ell(\ell+1) \left(
\frac{\delta}{1-e^{-\delta x}}\right)  ^{2}e^{-\delta x} -\frac{2\delta
e^{-\delta x}}{1-e^{-\delta x}}\right)  u=\rho^{2}u,\qquad x>0.
\end{equation}
Here $0<\delta<1$.

The Hulth\'{e}n potential $q_{H}(r)=-\frac{2\delta e^{-\delta x}}{1-e^{-\delta
x}}$ is known as a potential providing a better approximation to the screened
Coulomb (Yukawa) potential $q_{sc}(r)=-\frac{2e^{-\delta r}}{r}$ than the
ordinary Coulomb potential $-\frac{2}{r}$, see, e.g., \cite{Ma1954},
\cite{Greene}. However, it can be exactly solved only for zero angular
momentum, i.e., for $\ell=0$. Greene and Aldrich \cite{Greene} considered the
Hulth\'{e}n effective potential as an exactly solvable approximation for all
values of $\ell$. Equation \eqref{eq HulthenEff} can be transformed into the
form \eqref{perturbed Bessel equation} if one considers
\[
q(x)=\ell(\ell+1)\left(  \frac{\delta}{1-e^{-\delta x}}\right)  ^{2}e^{-\delta
x}-\frac{\ell(\ell+1)}{x^{2}}-\frac{2\delta e^{-\delta x}}{1-e^{-\delta x}}.
\]
Note that $q(x)\sim-\frac{2}{x}+\frac{\delta\ell(\ell+1)}{x}$ as
$x\rightarrow0$ and $q(x)\sim-\frac{\ell(\ell+1)}{x^{2}}$ as $x\rightarrow
\infty$, so the potential $q$ does not satisfy the condition
\eqref{Condition on q}. Nevertheless, the spectral problem for the original
equation \eqref{eq HulthenEff} possesses at most a finite number of negative
eigenvalues, see, e.g., \cite{Sohin1975}, so the corresponding quantum
scattering problem can be solved by the same method, see \cite{Chadan}.

To simplify the consideration below in what follows we assume that
$2\ell\not \in \mathbb{Z}$. Then the general solution of \eqref{eq HulthenEff}
has the form
\begin{equation}
\label{Hulthen gensol}%
\begin{split}
u(x)  &  =A y^{-\ell} e^{i\rho x} \, _{2}F_{1}\left(  -\ell-a_{\rho}-s_{\rho
},-\ell-a_{\rho}+s_{\rho};-2 \ell;y\right) \\
&  \quad+B y^{\ell+1}\,e^{i\rho x} \, _{2}F_{1}\left(  \ell+1-a_{\rho}%
-s_{\rho},\ell+1-a_{\rho}+s_{\rho};2 \ell+2;y\right)  ,
\end{split}
\end{equation}
where $y = 1-e^{-\delta x}$, $a_{\rho}= i\frac\rho\delta$ and $s_{\rho}=
\frac{\sqrt{2 \delta-\rho^{2} }}{\delta}$. This expression was obtained
solving transformed equation (7) from \cite{Greene} using Wolfram Mathematica 10.

Note that $y\rightarrow0$ as $x\rightarrow0$, hence the regular solution of
\eqref{eq HulthenEff} has the form
\begin{equation}
\varphi_{\ell}(\rho,x)=\frac{(1-e^{-\delta x})^{\ell+1}\,e^{i\rho x}}%
{\delta^{\ell+1}\left(  2\ell+1\right)  !!}\,_{2}F_{1}\left(  \ell+1-a_{\rho
}-s_{\rho},\ell+1-a_{\rho}+s_{\rho};2\ell+2;1-e^{-\delta x}\right)  .
\label{Hulthen regsol}%
\end{equation}
On the other hand, $y\rightarrow1$ as $x\rightarrow+\infty$, so the values of
the hypergeometric functions in \eqref{Hulthen gensol} are not defined by
their series expansions and to find the Jost solution we need to apply the
following analytic continuation \cite[(2.10.1)]{Erdelyi}
\[%
\begin{split}
\,_{2}F_{1}(a,b;c;z)  &  =\frac{\Gamma(c)\Gamma(c-a-b)}{\Gamma(c-a)\Gamma
(c-b)}\,_{2}F_{1}(a,b;a+b-c+1;1-z)\\
&  \quad+\frac{\Gamma(c)\Gamma(a+b-c)}{\Gamma(a)\Gamma(b)}(1-z)^{c-a-b}%
\,_{2}F_{1}(c-a,c-b;c-a-b+1;1-z).
\end{split}
\]
Then
\begin{align*}
u(x)  &  =A\frac{y^{-\ell}e^{i\rho x}\Gamma(-2\ell)\Gamma(2a_{\rho})}%
{\Gamma(-\ell+a_{\rho}+s_{\rho})\Gamma(-\ell+a_{\rho}-s_{\rho})}\,_{2}%
F_{1}(-\ell-a_{\rho}-s_{\rho},-\ell-a_{\rho}+s_{\rho};1-2a_{\rho};e^{-\delta
x})\\
\displaybreak[2]  &  +A\frac{y^{-\ell}e^{-i\rho x}\Gamma(-2\ell)\Gamma
(-2a_{\rho})}{\Gamma(-\ell-a_{\rho}-s_{\rho})\Gamma(-\ell-a_{\rho}+s_{\rho}%
)}\,_{2}F_{1}(-\ell+a_{\rho}+s_{\rho},-\ell+a_{\rho}-s_{\rho};1+2a_{\rho
};e^{-\delta x})\\
\displaybreak[2]  &  +B\frac{y^{\ell+1}e^{i\rho x}\Gamma(2\ell+2)\Gamma
(2a_{\rho})}{\Gamma(\ell+1+a_{\rho}+s_{\rho})\Gamma(\ell+1+a_{\rho}-s_{\rho}%
)}\,_{2}F_{1}(\ell+1-a_{\rho}-s_{\rho},\ell+1-a_{\rho}+s_{\rho};1-2a_{\rho
};e^{-\delta x})\\
&  +B\frac{y^{\ell+1}e^{-i\rho x}\Gamma(2\ell+2)\Gamma(-2a_{\rho})}%
{\Gamma(\ell+1-a_{\rho}-s_{\rho})\Gamma(\ell+1-a_{\rho}+s_{\rho})}\,_{2}%
F_{1}(\ell+1+a_{\rho}+s_{\rho},\ell+1+a_{\rho}-s_{\rho};1+2a_{\rho};e^{-\delta
x}).
\end{align*}
The first and the third terms behave like constant by $e^{i\rho x}$ when
$x\rightarrow\infty$, while the second and the forth terms behave like
constant by $e^{-i\rho x}$ when $x\rightarrow\infty$. Hence for the solution
$u$ to be the Jost solution, the coefficients $A$ and $B$ have to satisfy the
following system
\begin{align*}
&  A\cdot\frac{\Gamma(-2\ell)\Gamma(2a_{\rho})}{\Gamma(-\ell+a_{\rho}+s_{\rho
})\Gamma(-\ell+a_{\rho}-s_{\rho})}+B\cdot\frac{\Gamma(2\ell+2)\Gamma(2a_{\rho
})}{\Gamma(\ell+1+a_{\rho}+s_{\rho})\Gamma(\ell+1+a_{\rho}-s_{\rho})}%
=e^{i\pi\ell/2},\\
&  A\cdot\frac{\Gamma(-2\ell)\Gamma(-2a_{\rho})}{\Gamma(-\ell-a_{\rho}%
-s_{\rho})\Gamma(-\ell-a_{\rho}+s_{\rho})}+B\cdot\frac{\Gamma(2\ell
+2)\Gamma(-2a_{\rho})}{\Gamma(\ell+1-a_{\rho}-s_{\rho})\Gamma(\ell+1-a_{\rho
}+s_{\rho})}=0.
\end{align*}
Solving this system and using Euler's reflection formula $\Gamma
(1-z)\Gamma(z)=\frac{\pi}{\sin\pi z}$ we obtain that
\begin{align*}
A  &  =e^{i\pi\ell/2}\frac{\Gamma(1-2a_{\rho})\Gamma(2\ell+1)}{\Gamma
(\ell+1-a_{\rho}+s_{\rho})\Gamma(\ell+1-a_{\rho}-s_{\rho})},\\
B  &  =-e^{i\pi\ell/2}\frac{\Gamma(-2\ell)\Gamma(1-2a_{\rho})}{(2\ell
+1)\Gamma(-\ell-a_{\rho}+s_{\rho})\Gamma(-\ell-a_{\rho}-s_{\rho})}.
\end{align*}
The Jost function is given by
\[
F_{\ell}(\rho)=\lim_{x\rightarrow0}\frac{\left(  -\rho x\right)  ^{\ell}%
}{\left(  2\ell-1\right)  !!}\frac{Ae^{i\rho x}}{(1-e^{-\delta x})^{\ell}%
}=\frac{A}{(2\ell-1)!!}\left(  -\frac{\rho}{\delta}\right)  ^{\ell},
\]
hence
\begin{equation}
F_{\ell}(\rho)=\frac{e^{i\pi\ell/2}}{(2\ell-1)!!}\frac{\Gamma\left(
1-2i\frac{\rho}{\delta}\right)  \Gamma(2\ell+1)}{\Gamma\Bigl(\ell
+1-i\frac{\rho}{\delta}+\frac{\sqrt{2\delta-\rho^{2}}}{\delta}\Bigr)\Gamma
\Bigl(\ell+1-i\frac{\rho}{\delta}-\frac{\sqrt{2\delta-\rho^{2}}}{\delta
}\Bigr)}\left(  -\frac{\rho}{\delta}\right)  ^{\ell}. \label{Hulthen Fl}%
\end{equation}
Note that this expression is also well defined for values $\ell$ satisfying
$2\ell\in\mathbb{N}$.

The eigenvalues $\rho_{j}=i\tau_{j}$, $\tau_{j}\geq0$ correspond to zeros of
the function $F_{\ell}$. One can easily see that all such zeros coincide with
the values of $\rho$ for which $\Gamma\Bigl(\ell+1-i\frac{\rho}{\delta}%
-\frac{\sqrt{2\delta-\rho^{2}}}{\delta}\Bigr)=\infty$, i.e., when the argument
of the gamma function is a non-positive integer, which is equivalent to the
equation
\[
\ell+1+\frac{\tau}{\delta}-\frac{\sqrt{2\delta+\tau^{2}}}{\delta}=-m,\qquad
m\in\mathbb{N}_{0}.
\]
Squaring the equation we find that
\[
(\ell+1+m)^{2}+\frac{\tau^{2}}{\delta^{2}}+2\frac{\tau}{\delta}(\ell
+1+m)=\frac{2}{\delta}+\frac{\tau^{2}}{\delta^{2}},
\]
or
\[
\tau=\frac{1}{\ell+1+m}-\frac{\delta}{2}(\ell+1+m),\qquad m\in\mathbb{N}_{0}.
\]
Recalling that $\tau$ must be non-negative, we find that the set of
eigenvalues is given by
\[
\tau_{j}=\frac{1}{\ell+j}-\frac{\delta}{2}(\ell+j),\qquad j=1,\ldots,\left[
\sqrt{\frac{2}{\delta}}-\ell\right]  ,
\]
where $[\cdot]$ is the integer part function. The corresponding eigenfunctions
are given by
\[
\varphi_{\ell}(i\tau_{j},x)=\frac{(1-e^{-\delta x})^{\ell+1}\,e^{i\rho x}%
}{\delta^{\ell+1}\left(  2\ell+1\right)  !!}\,_{2}F_{1}\left(  -j+1,\ell
+1+\frac{2}{\delta(\ell+1)};2\ell+2;1-e^{-\delta x}\right)  .
\]
Note that the first argument of the hypergeometric function is a non-positive
integer, so the hypergeometric function reduces to a polynomial. The
corresponding norming constants can be easily obtained numerically.

On Figure \ref{Figure Coloumb} we show the recovered potential and the
absolute error. In this example the function $\left\vert F_{\ell}%
(\rho)\right\vert ^{-2}-1$ decays as $1/\rho$, and not as $1/\rho^{2}$ as was
considered in Subsection \ref{SubsectAsymptotics}. However, a similar
procedure was implemented to improve the computation of the integrals. The
interval $\rho\in[0, 1000]$ was used for numerical integration.

\begin{figure}[tbh]
\centering
\includegraphics[bb=0 0 216 173
height=2.4n,
width=3in
]{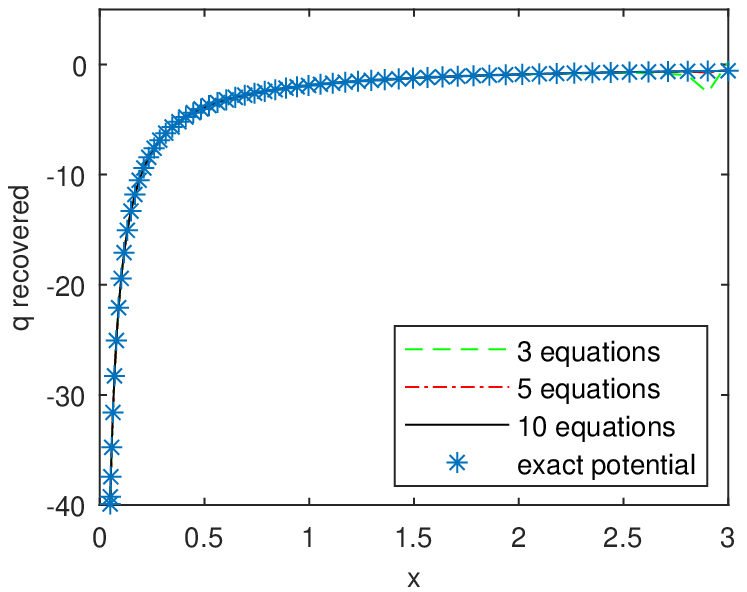} 
\quad
\includegraphics[bb=0 0 216 173
height=2.4n,
width=3in
]{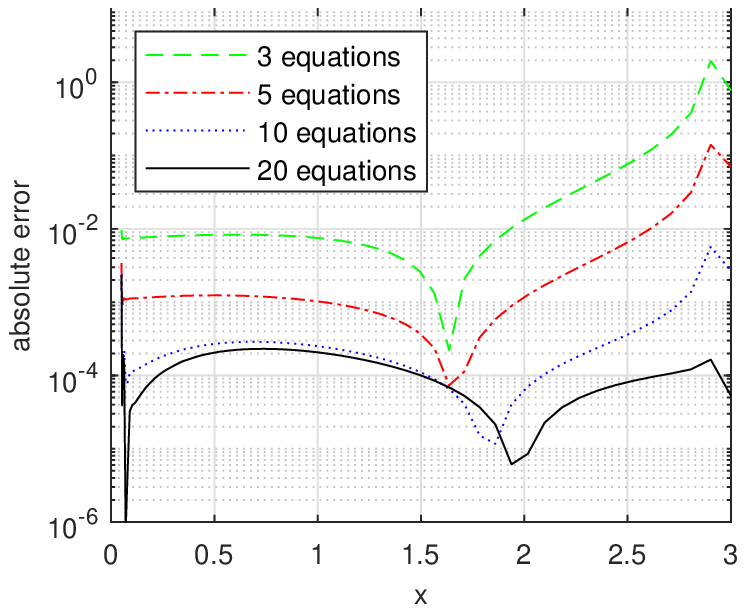}
\caption{ Recovered Hulth\'{e}n effective potential from
Example \ref{Ex3} with $\ell=1/3$ and $\delta=1/10$ (on the left) and the
absolute error (on the right). The potential was recovered on the interval
$[\frac{1}{20},3]$ using up to 20 equations in the truncated system
\eqref{truncated system A}.}%
\label{Figure Coloumb}%
\end{figure}

Proposition \ref{Prop CondNum} states that the condition numbers of the
matrices arising in the process remain bounded and that the method is stable
for a small noise. We illustrate these statements by Figure
\ref{Figure Stability}. On the left we show the smallest and the largest
eigenvalues of the coefficient matrix of the truncated system
\eqref{system Kantorovich} as the function of the truncation parameter $M$. As
one can see, the condition number, which is equal to the quotient of these
eigenvalues, remains bounded independently of the number of equations used. On
the right we show the potential recovered from the noisy data $\{\tau
_{j},c_{j}\}_{j=1}^{4}$ and $F_{\ell}(\rho)$, $\rho\in\left\{  \frac{k}%
{10}\right\}  _{k=0}^{1000}$, 10\% noise was added to all the values.

\begin{figure}[tbh]
\centering
\includegraphics[bb=0 0 216 173
height=2.4n,
width=3in
]{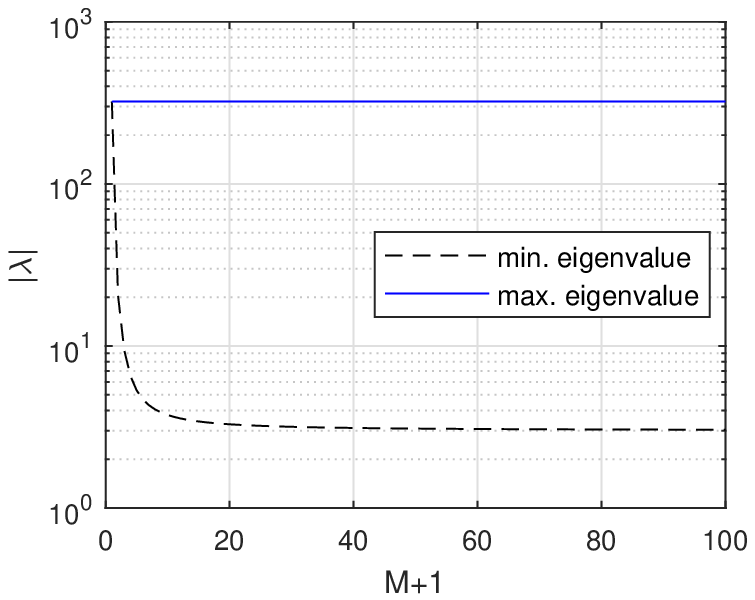} 
\quad
\includegraphics[bb=0 0 216 173
height=2.4n,
width=3in
]{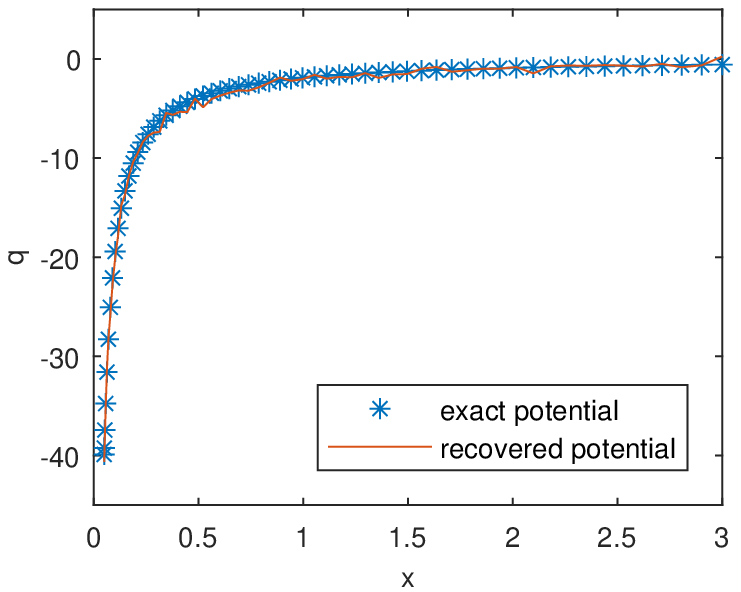}
\caption{ The Hulth\'{e}n effective potential from Example \ref{Ex3}
having $\ell=1/3$ and $\delta=1/10$ is considered. On the left: the smallest
and the largest eigenvalues of the coefficient matrix of the truncated system
for $x=3$ as the function of $M$. On the right: potential recovered from the
noisy data.}%
\label{Figure Stability}%
\end{figure}
\end{example}

\section{Conclusions}

A direct and simple method for solving the inverse quantum scattering problem
for an arbitrary angular momentum $\ell\geq-1/2$ is presented. Numerical
solution of the problem reduces to the solution of a system of linear
algebraic equations from which the first component of the solution vector is
sufficient for recovering the potential. Numerical results reveal a remarkable
accuracy, stability and a fast convergence of the method.

\appendix

\section{On the asymptotic behaviour of the function $F_{\ell}$}

\label{SectAppendix} According to \cite[Lemma B.5]{Teschl} the function
$F_{\ell}$ admits the following integral representation
\begin{equation}
\label{Fl integral}F_{\ell}(\rho) = 1+\int_{0}^{\infty}\psi^{0}_{\ell}(\rho,x)
\varphi_{\ell}(\rho,x) q(x) \,dx,
\end{equation}
where $\varphi_{\ell}$ is the regular solution of
\eqref{perturbed Bessel equation} considered in Section \ref{Sect2} and
$\psi_{\ell}^{0}$ denotes the Jost solution of
\eqref{perturbed Bessel equation} with $q\equiv0$. We also denote the regular
solution for $q\equiv0$ by $\varphi_{\ell}^{0}$.

First, we recall some estimates from \cite{Teschl} and \cite{Teschl2016}. The
solutions $\varphi_{\ell}^{0}$ and $\psi_{\ell}^{0}$ of the unperturbed
equation are given by
\begin{align}
\varphi_{\ell}^{0}(\rho,x)  & = \rho^{-\ell-1/2} \sqrt{\frac{\pi x}{2}}
J_{\ell+1/2} (\rho x),\label{sol phi0}\\
\psi_{\ell}^{0}(\rho,x)  & = i\rho^{\ell+1/2} \sqrt{\frac{\pi x}{2}}
H_{\ell+1/2}^{(1)}(\rho x).\label{sol psi0}%
\end{align}
Here $H_{\ell+1/2}^{0}$ is the Hankel function of the first kind. The
following estimates hold for $\rho\in\mathbb{R}$, $\ell>-1/2$
\begin{equation}
|\varphi_{\ell}^{0}(\rho,x)|\le C\left( \frac{x}{1+|\rho|x}\right) ^{\ell
+1},\qquad|\psi_{\ell}^{0}(\rho,x)|\le C\left( \frac{x}{1+|\rho|x}\right)
^{-\ell}.\label{phi0 psi0 est}%
\end{equation}
For $\ell=-1/2$ the first estimate remains valid, and the second changes to
\begin{equation}
\label{psi0m12}|\psi_{-1/2}^{0}(\rho,x)|\le C\left( \frac{x}{1+|\rho|x}\right)
^{1/2}\left( 1-\log\frac{|\rho|x}{1+|\rho|x}\right) .
\end{equation}

Considering the difference between the regular solutions $\varphi_{\ell}$ and
$\varphi_{\ell}^{0}$, the following estimate follows from \cite[(2.21) and
(2.23)]{KosSakhTesh2010} and \cite[Lemma B.2]{Teschl}
\begin{equation}
\label{phi phi0}%
\begin{split}
|\varphi_{\ell}(\rho,x) - \varphi_{\ell}^{0}(\rho,x)|  &  \le\sum
_{n=1}^{\infty}\frac{C^{n+1}}{n!} \left( \frac{x}{1+|\rho|x}\right)  ^{\ell+1}
e^{|\operatorname{Im}\rho|x} \biggl(\int_{0}^{x} \frac{y|\bar q(y)|}%
{1+|\rho|y}dy\biggr)^{n}\\
& = C\left( \frac{x}{1+|\rho|x}\right)  ^{\ell+1} e^{|\operatorname{Im}\rho|x}
\left(  \exp\biggl(C\int_{0}^{x} \frac{y|\bar q(y)|}{1+|\rho|y}%
dy\biggr)-1\right) ,
\end{split}
\end{equation}
here $\bar q(x) = q(x)$ if $\ell>-1/2$ and $\bar q(x) = \bigl(1-\log(\frac
x{1+x})\bigr)q(x)$ if $\ell=-1/2$.

For large values of arguments the functions $\varphi_{\ell}^{0}$ and
$\psi_{\ell}^{0}$ can be approximated using the asymptotic formulas for the
functions $J_{\nu}$ and $H_{\nu}^{(1)}$. We have (see \cite[(9.2.5)--(9.2.10)]%
{AbramowitzStegunSpF} for $z\in\mathbb{R}$
\begin{equation}
\label{J and H series}J_{\nu}(z) = \sqrt{\frac{2}{\pi z}} \bigl(P(\nu
,z)\cos\chi- Q(\nu,z)\sin\chi\bigr),\qquad H_{\nu}^{(1)}(z) = \sqrt{\frac
{2}{\pi z}} \bigl(P(\nu,z)+i Q(\nu,z)\bigr)e^{i\chi},
\end{equation}
where $\chi= z - (\frac{\nu}2+\frac14)\pi$ and
\begin{align}
P(\nu,z)  & = \sum_{k=0}^{[\nu/2+3/4]} (-1)^{k}\frac{(\nu,2k)}{(2z)^{2k}} +
\theta_{1}(z) \frac{(\nu, 2[\nu/2+7/4])}{(2z)^{2[\nu/2+7/4]}},\label{Pas}\\
Q(\nu,z)  & = \sum_{k=0}^{[\nu/2+1/4]} (-1)^{k}\frac{(\nu,2k+1)}{(2z)^{2k+1}}
+ \theta_{2}(z) \frac{(\nu, 2[\nu/2+5/4]+1)}{(2z)^{2[\nu/2+5/4]+1}%
},\label{Qas}%
\end{align}
with $|\theta_{1,2}|\le1$. From \eqref{J and H series}--\eqref{Qas} it follows
that for all $z\ge1$
\begin{align*}
J_{\nu}(z)  &  = \sqrt{\frac{2}{\pi z}} \left(  \cos\left( z-\frac{\nu\pi}2 -
\frac{\pi}4\right)  + \frac{4\nu^{2}-1}{8z}\sin\left( z-\frac{\nu\pi}2 -
\frac{\pi}4\right)  + O\left( \frac1{z^{2}}\right) \right) ,\\
H_{\nu}^{(1)}(z)  & = \sqrt{\frac{2}{\pi z}}e^{i\left( z-\frac{\nu\pi}2 -
\frac{\pi}4\right) }\left( 1+ i\frac{4\nu^{2}-1}{8z}+ O\left( \frac1{z^{2}%
}\right) \right) ,
\end{align*}
here $O(1/z^{2})$ means that there exists a constant $C$ such that the
remainder is bounded by $C/z^{2}$ for all $z\ge1$. Taking the product and
expanding $\cos\chi$ and $\sin\chi$ via the sum and difference of $e^{i\chi}$
and $e^{-i\chi}$, we obtain that
\begin{equation}
\label{ProdAsympt}J_{\nu}(z) H_{\nu}^{(1)}(z) = \frac{1}{\pi z} \left(  1 +
e^{2i\left( z-\frac{\nu\pi}2 - \frac{\pi}4\right) } + i\frac{4\nu^{2}-1}%
{4z}e^{2i\left( z-\frac{\nu\pi}2 - \frac{\pi}4\right) } + O\left( \frac
1{z^{2}}\right) \right)
\end{equation}
for all $z\ge1$.

\begin{remark}
One can easily deduce from \eqref{sol phi0}, \eqref{sol psi0} and
\eqref{ProdAsympt} that
\[
2\rho\varphi_{\ell}^{0}(\rho,x) \psi_{\ell}^{0}(\rho,x) = i\left(
1+e^{2i\left( \rho x-\frac{(\ell+1) \pi}2\right) }+O\left( \frac1\rho\right)
\right) ,
\]
and does not converge when $\rho\to\infty$ contrary to what is stated in
\cite[Remark 2.14]{Teschl2016}.
\end{remark}

We refer the reader to \cite{Mendoza2016} for the definition of the total
variation of a function (denoted by $V(f;\mathbb{R})$) and the class of
bounded variation vanishing at infinity functions (denoted by $BV_{0}%
(\mathbb{R})$). We need the following two properties of the functions from
$BV_{0}(\mathbb{R})$ class.

\begin{lemma}
\label{Lemma BV prod} Let $f\in BV_{0}(\mathbb{R})$ and $g\in BV(\mathbb{R})$.
Then $f\cdot g\in BV_{0}(\mathbb{R})$, and $V(f\cdot g;\mathbb{R})\le
V(f;\mathbb{R})\cdot\bigl(V(g;\mathbb{R})+\|g\|_{L_{\infty}(\mathbb{R}%
)}\bigr)$.
\end{lemma}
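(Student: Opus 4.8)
The plan is to prove the product estimate by the standard supremum-of-partial-sums characterisation of total variation together with the usual product rule for variation. First I would fix a partition $x_0<x_1<\dots<x_k$ of $\mathbb{R}$ (in the sense appropriate to $BV$ on the whole line, i.e. taking the endpoints to $\pm\infty$ as limits) and write each increment of the product in the telescoping form
\[
f(x_{i})g(x_{i})-f(x_{i-1})g(x_{i-1})=\bigl(f(x_{i})-f(x_{i-1})\bigr)g(x_{i})+f(x_{i-1})\bigl(g(x_{i})-g(x_{i-1})\bigr).
\]
Taking absolute values, using $|g(x_i)|\le\|g\|_{L_\infty(\mathbb{R})}$ in the first term and $|f(x_{i-1})|\le\|f\|_{L_\infty(\mathbb{R})}$ in the second, and summing over $i$ gives
\[
\sum_{i}\bigl|f(x_i)g(x_i)-f(x_{i-1})g(x_{i-1})\bigr|\le \|g\|_{L_\infty(\mathbb{R})}\,V(f;\mathbb{R})+\|f\|_{L_\infty(\mathbb{R})}\,V(g;\mathbb{R}).
\]
Since this holds for every partition, taking the supremum yields $V(f\cdot g;\mathbb{R})\le \|g\|_{L_\infty}V(f;\mathbb{R})+\|f\|_{L_\infty}V(g;\mathbb{R})$.

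Next I would clean this up into the stated form. Because $f\in BV_0(\mathbb{R})$ vanishes at infinity, one has the elementary bound $\|f\|_{L_\infty(\mathbb{R})}\le V(f;\mathbb{R})$: for any point $x$, $|f(x)|=|f(x)-\lim_{t\to-\infty}f(t)|$ is dominated by the variation of $f$ on $(-\infty,x]$, hence by $V(f;\mathbb{R})$. Substituting this into the second term replaces $\|f\|_{L_\infty}$ by $V(f;\mathbb{R})$, giving
\[
V(f\cdot g;\mathbb{R})\le V(f;\mathbb{R})\bigl(\|g\|_{L_\infty(\mathbb{R})}+V(g;\mathbb{R})\bigr),
\]
which is exactly the claimed inequality.

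It remains to check that $f\cdot g\in BV_0(\mathbb{R})$, i.e. that the product actually vanishes at infinity (finiteness of the variation is already contained in the displayed bound, since $V(f;\mathbb{R})<\infty$, $\|g\|_{L_\infty}<\infty$ and $V(g;\mathbb{R})<\infty$). This is immediate: $g\in BV(\mathbb{R})$ is bounded, so $|f(x)g(x)|\le\|g\|_{L_\infty(\mathbb{R})}|f(x)|\to0$ as $x\to\pm\infty$ because $f\in BV_0(\mathbb{R})$. I do not anticipate a genuine obstacle here; the only point requiring a little care is making sure the partition argument is set up correctly for the variation on the whole real line (handling the behaviour at $\pm\infty$), and confirming that the precise definitions of $V(\,\cdot\,;\mathbb{R})$ and $BV_0(\mathbb{R})$ from \cite{Mendoza2016} indeed support the bound $\|f\|_{L_\infty(\mathbb{R})}\le V(f;\mathbb{R})$ for functions vanishing at infinity, which is the one nontrivial ingredient beyond the bare product rule.
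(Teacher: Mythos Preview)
Your argument is correct and follows the same route as the paper: the paper simply cites the standard product-rule estimate $V(f\cdot g;\mathbb{R})\le V(g;\mathbb{R})\sup_{\mathbb{R}}|f|+V(f;\mathbb{R})\sup_{\mathbb{R}}|g|$ as known (referring to \cite{GC2002}) and then invokes $\sup_{\mathbb{R}}|f|\le V(f;\mathbb{R})$ for $f\in BV_0(\mathbb{R})$, exactly as you do. Your write-up is in fact a bit more complete, since you supply the telescoping proof of the product estimate and explicitly verify that $f\cdot g$ vanishes at infinity, neither of which the paper spells out.
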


\begin{proof}
It is well-known that the product of two functions of bounded variation is again a function of bounded variation, see, e.g. \cite{GC2002}, moreover
\[
V(f\cdot g;\mathbb{R})\le  V(g;\mathbb{R})\cdot \sup_{\mathbb{R}}|f| +  V(f;\mathbb{R})\cdot \sup_{\mathbb{R}}|g|.
\]
Now the statement follows observing that for $BV_0$ functions one has $\sup\limits_{\mathbb{R}}|f|\le V(f;\mathbb{R})$.
\end{proof}

\begin{lemma}
[{\cite[Corollary 10]{Mendoza2016}}]\label{Lemma Fourier decay} If $f\in
BV_{0}(\mathbb{R})$, then for all $\omega\in\mathbb{R}\setminus\{0\}$ its
Fourier transform $\hat f$ is defined and satisfies
\[
|\hat f(\omega)|\le\frac{V(f;\mathbb{R})}{|\omega|}.
\]

\end{lemma}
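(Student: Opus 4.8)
The plan is to obtain the estimate from a single integration by parts performed in the Stieltjes sense, the essential point being that although a function $f\in BV_{0}(\mathbb{R})$ need not be integrable, it induces a finite signed measure $df$ whose total mass is exactly $V(f;\mathbb{R})$. First I would fix the convention $\hat f(\omega)=\int_{\mathbb{R}}f(x)e^{-i\omega x}\,dx$, interpreted as the improper integral $\lim_{A\to\infty}\int_{-A}^{A}$, and record the elementary facts about $f\in BV_{0}(\mathbb{R})$ that will be used: $f$ is bounded with $\sup_{\mathbb{R}}|f|\le V(f;\mathbb{R})$, it has one-sided limits at every point and at $\pm\infty$, and $\lim_{x\to\pm\infty}f(x)=0$; moreover $f$ defines a Lebesgue--Stieltjes measure $df$ with $|df|(\mathbb{R})=V(f;\mathbb{R})<\infty$.

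Next, on each finite interval $[-A,A]$ I would integrate by parts in the Riemann--Stieltjes sense, writing $e^{-i\omega x}$ as the derivative of $e^{-i\omega x}/(-i\omega)$:
\[
\int_{-A}^{A}f(x)e^{-i\omega x}\,dx=\Bigl[\,f(x)\frac{e^{-i\omega x}}{-i\omega}\,\Bigr]_{-A}^{A}+\frac{1}{i\omega}\int_{-A}^{A}e^{-i\omega x}\,df(x).
\]
Letting $A\to\infty$, the boundary term tends to $0$ because $f(\pm A)\to 0$, and the Stieltjes integral on the right converges to $\int_{\mathbb{R}}e^{-i\omega x}\,df(x)$ by dominated convergence against the finite measure $|df|$. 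In particular the improper integral defining $\hat f(\omega)$ exists, and one arrives at the identity
\[
\hat f(\omega)=\frac{1}{i\omega}\int_{\mathbb{R}}e^{-i\omega x}\,df(x),\qquad\omega\neq0.
\]
Estimating the right-hand side trivially gives $|\hat f(\omega)|\le\frac{1}{|\omega|}\int_{\mathbb{R}}|df(x)|=\frac{V(f;\mathbb{R})}{|\omega|}$, which is the claim.

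I expect the only genuine subtlety to be the bookkeeping in the Stieltjes integration by parts: one must either invoke the standard integration-by-parts identity for a $BV$ function against a $C^{1}$ integrand (taking care of the at-most-countably-many jump points of $f$ so that no mass is double-counted), or smooth $f$ by convolution and pass to the limit, or work with the Lebesgue--Stieltjes version directly. One also has to use cleanly that membership in $BV_{0}(\mathbb{R})$ really forces $f(x)\to 0$ as $x\to\pm\infty$ (not merely $\liminf|f|=0$), so that the boundary terms vanish. Both are routine once the definitions from \cite{Mendoza2016} are unpacked, and neither affects the final constant.
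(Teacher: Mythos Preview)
The paper does not supply its own proof of this lemma: it is quoted verbatim as \cite[Corollary 10]{Mendoza2016} and used as a black box in the proof of the subsequent Proposition. So there is nothing to compare against.

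Your argument is the standard one and is correct. The Riemann--Stieltjes integration by parts of a $C^{1}$ function against a $BV$ integrator is unproblematic on each finite interval, the boundary terms vanish because $BV_{0}$ by definition forces $f(x)\to 0$ at $\pm\infty$, and the remaining Stieltjes integral is dominated by $|df|(\mathbb{R})=V(f;\mathbb{R})$. The subtleties you flag (handling jump points, or alternatively mollifying) are real but routine, and you have identified them accurately. One minor point worth stating explicitly for completeness: the improper integral $\lim_{A\to\infty}\int_{-A}^{A}f(x)e^{-i\omega x}\,dx$ exists precisely because the right-hand side of your integration-by-parts identity converges, so existence of $\hat f(\omega)$ is established simultaneously with the bound rather than being assumed in advance.
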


Now we can formulate the main result of this section.

\begin{proposition}
Suppose that the potential $q\in L_{1}(0,\infty)\cap BV_{0}[0,\infty)$. Then
the asymptotics \eqref{F asymptotics} holds.
\end{proposition}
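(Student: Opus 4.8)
The plan is to start from the integral representation \eqref{Fl integral},
\[
F_{\ell}(\rho) = 1+\int_{0}^{\infty}\psi^{0}_{\ell}(\rho,x)\varphi_{\ell}(\rho,x) q(x)\,dx,
\]
and split $\varphi_{\ell}=\varphi_{\ell}^{0}+(\varphi_{\ell}-\varphi_{\ell}^{0})$. The contribution of the difference is controlled by the estimates \eqref{phi phi0}, \eqref{phi0 psi0 est} and \eqref{psi0m12}: multiplying the bounds for $|\psi_{\ell}^{0}|$ and $|\varphi_{\ell}-\varphi_{\ell}^{0}|$, the powers of $x/(1+|\rho|x)$ combine to give a factor $\bigl(x/(1+|\rho|x)\bigr)$, and since $\exp(C\int_{0}^{x}\frac{y|\bar q|}{1+|\rho|y}dy)-1 \le C'/|\rho|$ (using $\frac{y}{1+|\rho|y}\le \frac1{|\rho|}$ and $q$ integrable against the weight $x|\tilde q|$), one gets that this piece is $O(\rho^{-2})$ after integrating $q$ against $\frac{x}{1+|\rho|x}\le\frac1{|\rho|}$. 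The $\ell=-1/2$ case costs only an extra logarithmic factor, absorbed by the weight in \eqref{tilde q} (and hence in $\bar q$). So the whole game reduces to the model integral $\int_{0}^{\infty}\psi^{0}_{\ell}(\rho,x)\varphi_{\ell}^{0}(\rho,x)q(x)\,dx$.

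Next I would use the remark immediately preceding the proposition: from \eqref{sol phi0}, \eqref{sol psi0} and the product asymptotics \eqref{ProdAsympt},
\[
2\rho\,\varphi_{\ell}^{0}(\rho,x)\psi_{\ell}^{0}(\rho,x) = i\Bigl(1 + e^{2i(\rho x - (\ell+1)\pi/2)} + O(\tfrac1{\rho x})\Bigr)
\]
for $\rho x\ge 1$, together with the crude bound $|\varphi_{\ell}^{0}\psi_{\ell}^{0}|\le Cx$ valid for all $x$ (from \eqref{phi0 psi0 est}, the powers cancel). Splitting the integral at $x=1/\rho$: on $(0,1/\rho)$ one has $\bigl|\int\bigr|\le C\int_0^{1/\rho}x|q(x)|\,dx = O(\rho^{-2})$ since $x|\tilde q(x)|\in L_1$; on $(1/\rho,\infty)$ the "$1$" term yields $\frac{i}{2\rho}\int_{1/\rho}^{\infty}q(x)\,dx = \frac{i}{2\rho}\int_{0}^{\infty}q(x)\,dx + O(\rho^{-2})$ (the tail correction being bounded by $\frac1{2\rho}\int_0^{1/\rho}|q|\le\frac1{2\rho}\|x q\|_{L_1}\rho = O(1)$... here one must be a bit careful, so it is cleaner to keep the lower limit at $0$ from the start and estimate the $O(1/(\rho x))$ term separately where $\rho x<1$). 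The error term $O(\frac1{\rho x})$ contributes $O(\rho^{-2})\int |q|$, which is fine.

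The main obstacle — and the reason Lemmas \ref{Lemma BV prod} and \ref{Lemma Fourier decay} are introduced — is the oscillatory term $\frac{i}{2\rho}\int_{0}^{\infty}e^{2i(\rho x-(\ell+1)\pi/2)}q(x)\,dx$. Without extra regularity this is only $o(\rho^{-1})$ by Riemann–Lebesgue, which is not enough for $O(\rho^{-2})$. Here is where $q\in BV_0[0,\infty)$ enters: extending $q$ by zero to $\mathbb{R}$ keeps it in $BV_0(\mathbb{R})$, and Lemma \ref{Lemma Fourier decay} gives $\bigl|\int_{0}^{\infty}e^{2i\rho x}q(x)\,dx\bigr| = |\hat q(-2\rho)|\le \frac{V(q;\mathbb{R})}{2|\rho|}$, so the oscillatory term is genuinely $O(\rho^{-2})$. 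Lemma \ref{Lemma BV prod} is needed in the sharper bookkeeping when one does not merely replace $\varphi_{\ell}^{0}\psi_{\ell}^{0}$ by its leading constant but keeps track of the $e^{2i\rho x}$-modulated lower-order factors (and the $\varphi_{\ell}-\varphi_{\ell}^{0}$ correction paired with $\psi_{\ell}^{0}$), which are themselves $BV_0$ functions of $x$ times $q$; their Fourier decay then also contributes at the $O(\rho^{-2})$ level. Assembling the three pieces — the $\varphi_{\ell}-\varphi_{\ell}^{0}$ correction ($O(\rho^{-2})$), the non-oscillatory main term ($\frac{i}{2\rho}\int_0^\infty q + O(\rho^{-2})$), and the oscillatory term ($O(\rho^{-2})$ by Lemma \ref{Lemma Fourier decay}) — yields \eqref{F asymptotics}.
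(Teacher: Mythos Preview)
Your overall strategy is the same as the paper's: split off $\varphi_\ell-\varphi_\ell^0$ using \eqref{phi phi0}, cut the remaining model integral at $x=1/\rho$, feed in the product asymptotics \eqref{ProdAsympt} on $[1/\rho,\infty)$, and invoke Lemma~\ref{Lemma Fourier decay} for the oscillatory piece. That is exactly what the paper does. But two steps in your write-up do not go through as written.

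\medskip
\noindent\textbf{The $O(1/(\rho x))$ remainder.} You keep only the two-term expansion $1+e^{2i(\rho x-(\ell+1)\pi/2)}+O(\tfrac{1}{\rho x})$ and claim the remainder contributes $O(\rho^{-2})\int|q|$. That arithmetic is off: the contribution is
\[
\frac{1}{2\rho}\int_{1/\rho}^{\infty} O\!\left(\frac{1}{\rho x}\right)|q(x)|\,dx
= O(\rho^{-2})\int_{1/\rho}^{\infty}\frac{|q(x)|}{x}\,dx,
\]
and for bounded $q$ this last integral is of order $\log\rho$, giving only $O(\rho^{-2}\log\rho)$. The paper avoids this by expanding one more order in \eqref{ProdAsympt}: the next term is the explicit oscillatory piece $\frac{i(4\nu^2-1)}{4\rho x}e^{2i(\rho x-(\ell+1)\pi/2)}$, and the true remainder is $O((\rho x)^{-2})$, which \emph{does} integrate to $O(\rho^{-2})$ using boundedness of $q$. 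The intermediate term $\frac{1}{\rho^2}\int_{1/\rho}^\infty \frac{q(x)}{x}e^{2i\rho x}\,dx$ is then handled by applying Lemma~\ref{Lemma BV prod} to the product $q(x)\cdot \tfrac{1}{x}\mathbf{1}_{[1/\rho,\infty)}(x)$ (noting $V(\tfrac{1}{x}\mathbf{1}_{[1/\rho,\infty)})\le 2\rho$) and then Lemma~\ref{Lemma Fourier decay}; the $\rho$ from the variation cancels against the $1/\rho$ from the Fourier decay, leaving $O(\rho^{-2})$. This is the precise role of Lemma~\ref{Lemma BV prod} that you only gestured at.

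\medskip
\noindent\textbf{The small-$x$ pieces.} Your bound $\int_0^{1/\rho} x|q(x)|\,dx=O(\rho^{-2})$ does not follow from $x|\tilde q|\in L_1$ (that gives only $o(1)$), and your own parenthetical shows the tail correction $\frac{1}{2\rho}\int_0^{1/\rho}|q|$ causes trouble for the same reason. Both are saved by the $BV_0$ hypothesis, which makes $q$ \emph{bounded}: then $\int_0^{1/\rho} x|q|\,dx\le \|q\|_\infty/(2\rho^2)$ and $\int_0^{1/\rho}|q|\le \|q\|_\infty/\rho$. The paper uses exactly this. Likewise, the oscillatory integral on $[1/\rho,\infty)$ in the paper is treated by applying Lemma~\ref{Lemma BV prod} to $q\cdot\mathbf{1}_{[1/\rho,\infty)}$ so that the variation bound is uniform in $\rho$; your suggestion to extend $q$ by $0$ and integrate over all of $\mathbb{R}$ works only after you have separately controlled the $[0,1/\rho]$ portion, which again needs boundedness.
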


\begin{proof}
Due to the property $F_{\ell}(-\rho)=\overline{F_{\ell}}(\rho)$ we may assume that $\rho>0$.
First we assume that $\ell>-1/2$.
Let us rewrite \eqref{Fl integral} as
\[
\begin{split}
F_\ell(\rho) &= 1+\int_0^{1/\rho} \psi_\ell^0 (\rho, x) \varphi_\ell^0 (\rho,x)q(x)\,dx + \int_{1/\rho}^\infty \psi_\ell^0 (\rho, x) \varphi_\ell^0 (\rho,x)q(x)\,dx \\
&\quad + \int_0^\infty \psi_\ell^0 (\rho, x)\bigl(\varphi_\ell(\rho,x) -\varphi_\ell^0 (\rho,x)\bigr)q(x)\,dx =: 1+I_1 + I_2 + I_3.
\end{split}
\]
The integral $I_1$ can be estimated using \eqref{phi0 psi0 est} and noting that functions of bounded variation are bounded,
\[
|I_1|\le C^2\int_0^{1/\rho} \frac{x|q(x)|}{1+\rho x}\,dx \le \frac{C^2}{\rho}\int_0^{1/\rho} |q(x)|\,dx \le \frac{C_1}{\rho^2}.
\]
To estimate the integral $I_2$ we utilize \eqref{sol phi0}, \eqref{sol psi0} and \eqref{ProdAsympt} and obtain
\begin{align*}
I_2 & = \frac{i}{2\rho} \int_{1/\rho}^\infty q(x)\,dx + \frac{ie^{-i(\ell+1)\pi}}{2\rho} \int_{1/\rho}^\infty q(x) e^{2i\rho x}\,dx \\
& \quad -\frac{\ell(\ell+1)e^{-i(\ell+1)\pi}}{2\rho^2} \int_{1/\rho}^\infty \frac{q(x)}{x} e^{2i\rho x}\,dx + \frac{i}{2\rho} \int_{1/\rho}^\infty q(x)O\left(\frac 1{(\rho x)^2}\right)\,dx\\
& = I_4+I_5+I_6+I_7.
\end{align*}
Now we have
\[
I_4 = \frac{i}{2\rho} \int_{0}^\infty q(x)\,dx - \frac{i}{2\rho} \int_0^{1/\rho} q(x)\,dx = \frac{i}{2\rho} \int_{0}^\infty q(x)\,dx + O\left(\frac 1{\rho^2}\right),
\]
where we used that $q$ is bounded.
To estimate the integral in $I_5$ note that it can be considered as $\int_{-\infty}^\infty g(x) e^{2i\rho x}\,dx$, where $g(x) = \tilde q(x) \cdot \mathbf{1}_{[1/\rho,\infty)}(x)$, $\tilde q$ is an extension of $q$ to $\mathbb{R}$ by zero and $\mathbf{1}_{A}$ is the characteristic function of the set $A$. Both functions $\tilde q$ and $\mathbf{1}_{[1/\rho,\infty)}$ are of bounded variation and one can easily see from Lemma \ref{Lemma BV prod} that $V(g;\mathbb{R})\le 4V(q;[0,\infty))$. Now applying Lemma \ref{Lemma Fourier decay} we obtain that
\[
|I_5| = \frac{1}{2\rho}\left|\int_{-\infty}^\infty g(x) e^{2i\rho x}\,dx\right| \le \frac{4V(q;[0,\infty))}{4\rho^2}.
\]
The estimate for the integral $I_6$ is similar noting that the function $\frac 1x \mathbf{1}_{[1/\rho,\infty)}(x)\in BV_0(\mathbb{R})$ and $V(\frac 1x \mathbf{1}_{[1/\rho,\infty)}(x);\mathbb{R})\le 2\rho$.
Recalling the meaning of the $O$ symbol, we have for $I_7$
\[
|I_7|\le \frac{C}{2\rho^3}\int_{1/\rho}^\infty \frac{|q(x)|}{x^2}\,dx \le \frac{C}{2\rho^3}\int_{1/\rho}^\infty \frac{C_2}{x^2}\,dx = \frac{CC_2}{2\rho^2}.
\]
Finally, for the integral $I_3$ we utilize \eqref{phi0 psi0 est} and \eqref{phi phi0} and obtain
\[
|I_3|\le C^2\int_0^\infty \frac{x|q(x)|}{1+\rho x} \left( \exp\biggl(C\int_0^x \frac{y|q(y)|}{1+\rho y}dy\biggr)-1\right) \,dx.
\]
Since $\int_0^x \frac{y|q(y)|}{1+|\rho|y}dy\le \frac{1}{\rho}\int_0^x |q(x)|\,dx\le \frac 1{\rho} \|q\|_{L_1(0,\infty)}$, we have
\[
\left|\exp\biggl(C\int_0^x \frac{y|q(y)|}{1+\rho y}dy\biggr)-1\right| \le \exp\left(\frac{C\|q\|_{L_1(0,\infty)}}{\rho}\right)-1 = O\left(\frac 1\rho\right),
\]
hence $I_3 = O(\frac 1{\rho^2})$. Combining all the estimates we obtain the statement.
Now assume that $\ell=-1/2$. From all the integrals $I_1,\ldots,I_7$ only the integrals $I_1$ and $I_3$ have to be treated differently from the case $\ell>-1/2$. For the first we use the estimate \eqref{psi0m12} and boundedness of $q$ and obtain
\[
|I_1|\le C^2\int_0^{1/\rho} \frac{x|q(x)|}{1+\rho x}\left(1-\log\frac{|\rho|x}{1+|\rho|x}\right)\,dx \le
\frac{C^2C_q}{\rho}\left(\frac 1\rho - \int_0^{1/\rho} \log\frac{|\rho|x}{1+|\rho|x} \,dx \right)= \frac{C_3}{\rho^2}(1+\log 4).
\]
For the second integral note that since the function $q$ is bounded, the function $\bar q\in L_1(0,\infty)$, hence the same proof with the replacement of $q$ by $\bar q$ works.
\end{proof}

\end{document}